\newtheorem{theorem}{Theorem}{}
{}
\newtheorem{lemma}{Lemma}{}
\newtheorem{assumption}{Assumption}
\newtheorem{corollary}{Corollary}{}
\newtheorem{remark}{Remark}{}
\newenvironment{proof}{\hspace{0ex}\textsc{Proof}.\hspace{1ex}}{\hfill$\Box$\newline}
\begin{document}

\begin{frontmatter}

\title{\textcolor{black}{Consensus-Based Distributed Filtering with Fusion Step Analysis}\thanksref{mytitle}}
\thanks[mytitle]{The work by J. Qian and Z. Duan is supported by the National Key R\&D Program of China under Grant 2018AAA0102703, and the National Natural Science Foundation of China under Grants T2121002 and 62173006. The work by P. Duan and L. Shi is supported by a Hong Kong RGC General Research Fund 16210619. \emph{(Corresponding author: Zhisheng Duan.)}}

\author[mymainaddress]{Jiachen Qian}\ead{jcq@pku.edu.cn},
\author[mythirdaddress]{Peihu Duan}\ead{duanpeihu@pku.edu.cn},    
\author[mymainaddress]{Zhisheng~Duan}\ead{duanzs@pku.edu.cn},               
\author[mysecondaryaddress]{Guanrong Chen}\ead{eegchen@cityu.edu.hk},  
\author[mythirdaddress]{Ling Shi}\ead{eesling@ust.hk}  

\address[mymainaddress]{State Key Laboratory for Turbulence and Complex Systems, Department of Mechanics and Engineering Science, College of Engineering, Peking University, Beijing 100871, China}
\address[mysecondaryaddress]{Department of Electrical Engineering, City University of Hong Kong, Hong Kong SAR, China}
\address[mythirdaddress]{Department of Electronic and Computer Engineering, the Hong Kong University of Science and Technology, Clear Water Bay, Kowloon, Hong Kong, China}

\begin{keyword}
Distributed filtering, Consensus, Information fusion, Algebraic Riccati Equation
\end{keyword}

\begin{abstract}
For consensus on measurement-based distributed filtering (CMDF),  through infinite consensus fusion operations during each sampling interval, each node in the sensor network can achieve optimal filtering performance with centralized filtering. However, due to the limited communication resources in physical systems,  \textcolor{black}{the number of fusion steps} cannot be infinite. To deal with this issue, {the present} paper analyzes the performance of CMDF with finite consensus fusion operations. First, by introducing a modified discrete-time algebraic Riccati equation and several novel techniques, the convergence of the estimation error covariance matrix of each sensor is guaranteed \textcolor{black}{under a collective observability condition}. In particular, \textcolor{black}{the steady-state covariance matrix} can be simplified as the solution to a discrete-time Lyapunov equation. Moreover, the performance degradation induced by reduced fusion frequency  is {obtained} in closed form, which establishes an analytical relation between the performance of the CMDF with finite fusion steps and that of centralized filtering. Meanwhile, it provides a trade-off between the filtering performance and the communication cost.  Furthermore, it is shown that the steady-state estimation error covariance matrix  exponentially converges to the centralized optimal steady-state {matrix} with fusion operations tending to infinity {during each sampling interval}. Finally, the theoretical results are verified with illustrative numerical experiments.
\end{abstract}

\end{frontmatter}

\section{Introduction}
In the past {three} decades, the development of wireless sensor network (WSN) facilitates a wide range of engineering applications, e.g., environmental monitoring \cite{Silva20151099} and spacecraft navigation\cite{Vu2015}. {In these scenarios, sensors measure partial states of the {target system} and reconstruct the state of the system}, which can be used to {execute various kinds of tasks}. 

\textcolor{black}{For WSNs consisting of multiple sensors}, frameworks of state estimation are divided into {two main categories}: centralized state estimation and distributed
state estimation. For centralized state estimation, a global fusion center receives and processes {the measurements} from sensors in the WSN. For distributed state estimation, each sensor acts as a local fusion center that collects observation information from its neighbors to reconstruct the state {of the whole system}. Compared with centralized state estimation, distributed state estimation {is easier to implement} and more robust, {while its filtering mechanism} is more challenging to design.

With the development of multi-agent consensus theory, {some typical} consensus-based distributed filtering algorithms were {proposed} and extensively studied \cite{olfati2007distributed,olfati2009kalman,cattivelli2010diffusion,kamal2013information,battistelli2014kullback,Battistelli2015Linear,Wang20181300,he2020distributed,duan2020distributed,li2020Bound,Sayed20219353995}. {When equipped} with the average consensus technique, each sensor first attempts to collect the global information and then  altogether achieve the centralized optimal filtering performance. To name a few approaches, Olfati-Saber \cite{olfati2007distributed} {first formulated a framework of measurement-based distributed filtering (CMDF)}, where each sensor obtains the average of global measurement at every sampling instant by exchanging information with neighbors. Later, Kamal {\it et al.} \cite{kamal2013information} and Battistelli {\it et al.} \cite{battistelli2014kullback,Battistelli2015Linear} proposed some distributed information fusion techniques such as consensus on information matrix, hybrid consensus on measurement and information matrix. \textcolor{black}{Cattivelli {\it et al.} \cite{cattivelli2010diffusion} and Talebi {\it et al.} \cite{Sayed20219353995} developed several distributed filtering structures based on average fusion with neighboring nodes, with the convergence of the estimation error covariance guaranteed.}

{In the above references}, the number of fusion step $L$ between two successive sampling instants $k$ and $k+1$ has a noticeable effect on the performance of the proposed algorithm. \textcolor{black}{Kamal {\it et al.} \cite{kamal2013information}, Kamagarpour and Tomlin \cite{Kamagarpour4738989} and Battilotti {\it et al.} \cite{BATTILOTTI2021109589} pointed out that each sensor node in a distributed sensor network can achieve perfect consensus on the observation information as the number of fusion step $L$ tends to infinity. Consequently, the performance of the distributed filtering algorithm converges to the centralized optimal steady-state performance \cite{Andersonoptimal}.}  However, the infinite fusion step $L$ is impossible to realize in practice,  {which renders the limited applicability of the consensus-based distributed filtering algorithm}. To deal with this issue, Battistelli {\it et al.} \cite{Battistelli2015Linear} and Li {\it et al.} \cite{li2020Bound,liCMweight} analyzed the CMDF algorithm with finite fusion step $L$. {Specifically}, Battistelli {\it et al.} \cite{Battistelli2015Linear} pointed out that the consensus weight takes an integral part in the performance of the {CMDF algorithm}. Li {\it et al.} \cite{li2020Bound,liCMweight} {carried out stability analysis of the {CMDF algorithm} and performed optimization to obtain proper consensus weights}. \textcolor{black}{Battilotti {\it et al.} \cite{BATTILOTTI2021109589} proposed a consensus-based distributed algorithm, and showed the convergence of the steady-state performance to the centralized one with the increase of fusion step $L$. However, in the field of  distributed filtering, the performance analysis was mainly on the boundedness and convergence of the matrix iteration, while {more significant} performance indexes, such as convergence properties of steady-state performance gap with the centralized filtering as $L$ tends to infinity, still require {further exploration}.} 

{In this paper, an in-depth investigation on the performance of the {CMDF algorithm} is carried out, along with an analysis on the number of fusion frequency $L$}. The main contributions of this paper are summarized as follows:
\begin{enumerate}
		\item The convergence condition of the {CMDF} algorithm with a finite fusion step $L$ is derived ({\bf Theorem \ref{thmconverge}}). {It is shown} that as long as the pair of state transition matrix and observation matrix is collectively observable, the steady-state performance of the estimation error covariance matrix can be simplified as the solution to a {discrete-time Lyapunov equation (DLE)}. This result quantitatively describes the steady-state performance degradation induced by insufficient information fusion, {which was rarely analyzed} in the literature \cite{olfati2007distributed,Kamagarpour4738989,li2020Bound}. 
	\item The effect of fusion step $L$ on the steady-state performance of CMDF is discussed. {The} infinite series expression of the difference between two {discrete-time algebraic Riccati equation (DARE)} solutions is formulated. \textcolor{black}{Based on this, {it is shown} that the convergence rate of the performance gap between the centralized optimal case and the distributed case is not slower than exponential convergence, with $L$ tending to infinity ({\bf Theorem \ref{thmric},\ref{thmcov}}).} \textcolor{black}{This result is a theoretical completion of performance analysis of {CMDF}, which reflects the trade-off between estimation accuracy and communication cost {in the distributed setting}.}
	\item Some new properties of the DARE are derived, including the expression of the difference between solutions to two different DARE and the uniform property of solutions to a group of DARE. {These novel properties {play an essential role} in facilitating the applicability of the CMDF algorithm}.
\end{enumerate}    

The remainder of this paper is organized as follows. {Some preliminaries}, including useful lemmas and the problem formulation, are presented in Section \ref{sec2}. The main results, including the derivation and analysis of the steady-state performance of the CM-based algorithm, are presented in Section \ref{sec3}. Some illustrative numerical simulations are presented in Section \ref{sec4}. Conclusions are {drawn} in Section \ref{sec5}. 

\textit{Notation:} For two symmetric matrices $X_{1}$ and $X_{2}$, $X_{1}> X_{2}\left(X_{1}\ge X_{2}\right)$ means $X_{1}-X_{2}$ is positive definite (positive semi-definite). {$exp\left(\cdot\right)$} denotes the exponential function. $\big|a\big|$ denotes the absolute value of real number $a$ or the norm of complex number $a$.  $\mathcal{L}\rhd 0\left(\unrhd 0\right)$ means that \textcolor{black}{all the elements} of matrix $\mathcal{L}$ are positive (non-negative). $\mathbb{E}\left\lbrace x\right\rbrace$ denotes the expectation of a random variable $x$. $\lambda\left(A\right)$ denotes the eigenvalue of matrix $A$. $\rho\left(A\right)$ denotes the spectral radius of $A$.  $\left\|A\right\|_2$ denotes the 2-norm (the largest singular value) of matrix $A$.

\section{Preliminaries}\label{sec2}
\subsection{Graph Theory}
The topology of {a} sensor network is denoted by a graph $\mathcal{G}=\left(\mathcal{V},\mathcal{E},\mathcal{L}\right)$, where $\mathcal{V}=\left\lbrace1,2,\dots,N\right\rbrace$ is the node set, $\mathcal{E}\subseteq \mathcal{V}\times\mathcal{V}$ is the edge set, and $\mathcal{L}=\left[l_{ij}\right]$ is the adjacency matrix of the network. The adjacency matrix reflects the communication among the nodes, $l_{ij}>0\Leftrightarrow\left(i,j\right)\in\mathcal{E}$, which means that sensor $i$ can receive the information from sensor $j$, thus sensor $j$ is denoted as the in-neighbor of sensor $i$, sensor $i$ is denoted as the out-neighbor of sensor $j$. $\mathcal{N}_{i}$ denotes the in-neighbor set of sensor $i$, and $l_{i}$ denotes the $i$-th row of $\mathcal{L}$, which contains the information of $\mathcal{N}_{i}$. \textcolor{black}{In addition, the matrix $\mathcal{L}$ is doubly stochastic, i.e, $\sum_{j=1}^{N}l_{ij}=\sum_{j=1}^{N}l_{ji}=1,\forall i\in\mathcal{V}$, {which ensures the average weighting} of the information from the neighboring nodes}. {The diameter $d$ of graph $\mathcal{G}$ is the length of the longest path between two nodes in the graph.}
  
\subsection{{Target System} and Sensor Model}
{Consider} a network of $N$ sensors, which measure and estimate the states of a target system, described as
\begin{equation}
\begin{aligned}
&x_{k+1}=Ax_k + \omega_k,\quad k = 0,1,2,\ldots\\
&y_{i,k}=C_{i}x_k + v_{i,k},\quad i = 1,2,\ldots,N,
\end{aligned}
\end{equation}
where $x_k\in\mathbb{R}^n$ is the state vector of the system and $y_{i,k}\in\mathbb{R}^{n_i}$ is the measurement vector of sensor $i$,  $\omega_k\in\mathbb{R}^n$ is the process noise with covariance matrix $Q\in\mathbb{R}^{n\times n}$ and $v_{i,k}\in\mathbb{R}^m$ is the observation noise with covariance matrix  $R_{i}\in\mathbb{R}^{n_i \times n_i}$. The sequences $\left\{\omega_k\right\}^{\infty}_{k=0}$ and $\left\{v_{i,k}\right\}^{\infty,N}_{k=0,i=1}$ are \textcolor{black}{mutually uncorrelated white Gaussian noise}. $Q$ and $R_{i}$ are positive definite. Besides, $A$ is the state-transition matrix and $C_{i}$ is the observation matrix of sensor $i$. {For simplicity, use $i\in\mathcal{V}$ to represent the $i$-th sensor of the network.} $C=\left[C_1^T,C_2^T,\dots,C_N^T\right]^T$ is the observation matrix and $R=diag\left(R_{1},\dots,R_{N}\right)$ is the covariance matrix of observation noise of the whole network. $l_{ij}^{(L)}$ is the $(i,j)$-th element of matrix $\mathcal{L}^L$,  {which is the network adjacency matrix at step $L$.}

\subsection{Some Useful Lemmas}
\textcolor{black}{\begin{lemma}\label{lm1}
	If the communication topology corresponding to $\mathcal{L}$ is strongly connected and $\mathcal{L}$ is doubly stochastic with positive diagonal elements, then the matrix $\mathcal{L}^k$ converges to $\frac{1}{N}\textbf{1}\textbf{1}^T$ exponentially as $k\to\infty$.  
\end{lemma}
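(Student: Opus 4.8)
The plan is to combine the doubly stochastic structure of $\mathcal{L}$ with the Perron--Frobenius theory of primitive nonnegative matrices. First I would note that double stochasticity makes $\textbf{1}$ a right eigenvector and $\textbf{1}^{T}$ a left eigenvector of $\mathcal{L}$ for the eigenvalue $1$, since $\mathcal{L}\textbf{1}=\textbf{1}$ and $\textbf{1}^{T}\mathcal{L}=\textbf{1}^{T}$. Writing $J=\frac{1}{N}\textbf{1}\textbf{1}^{T}$ for the candidate limit and using $\textbf{1}^{T}\textbf{1}=N$, I would record the algebraic identities $J^{2}=J$ and $J\mathcal{L}=\mathcal{L}J=J$; setting $\tilde{M}=\mathcal{L}-J$, these immediately give $J\tilde{M}=\tilde{M}J=0$.

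These relations decouple the powers of $\mathcal{L}$. Expanding $\mathcal{L}^{k}=(J+\tilde{M})^{k}$ and repeatedly invoking $J^{2}=J$ and $J\tilde{M}=\tilde{M}J=0$, every mixed term vanishes, so a one-line induction gives $\mathcal{L}^{k}=J+\tilde{M}^{k}$. Hence the claimed convergence $\mathcal{L}^{k}\to\frac{1}{N}\textbf{1}\textbf{1}^{T}$ reduces entirely to proving that $\tilde{M}^{k}\to0$ at a geometric rate, i.e. that $\rho(\tilde{M})<1$.

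Establishing the strict spectral gap $\rho(\tilde{M})<1$ is where the hypotheses are actually used, and I expect this to be the main obstacle. Strong connectivity makes $\mathcal{L}$ irreducible, while the positive diagonal entries (self-loops at every node) make it aperiodic; together these render $\mathcal{L}$ primitive, so that $\mathcal{L}^{m}\rhd0$ for some $m$. The Perron--Frobenius theorem for primitive matrices then yields that $\rho(\mathcal{L})=1$ is a simple eigenvalue and the only eigenvalue of modulus $1$. Since $J$ commutes with $\mathcal{L}$ and is the rank-one spectral projector onto the Perron eigenvector $\textbf{1}$, the operator $\tilde{M}$ acts as $0$ on the span of $\textbf{1}$ and as $\mathcal{L}$ on the complementary invariant subspace $\ker J$; consequently the spectrum of $\tilde{M}$ is that of $\mathcal{L}$ with the simple eigenvalue $1$ replaced by $0$, and $\rho(\tilde{M})$ equals the largest modulus among the remaining eigenvalues of $\mathcal{L}$, which is strictly below $1$.

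To finish, I would turn $\rho(\tilde{M})<1$ into an explicit exponential bound via Gelfand's formula (or the Jordan form of $\tilde{M}$): for any $\varepsilon>0$ with $\rho(\tilde{M})+\varepsilon<1$ there is $c>0$ with $\|\tilde{M}^{k}\|_{2}\le c(\rho(\tilde{M})+\varepsilon)^{k}$, whence $\|\mathcal{L}^{k}-\frac{1}{N}\textbf{1}\textbf{1}^{T}\|_{2}=\|\tilde{M}^{k}\|_{2}\to0$ geometrically. The delicate point throughout is that double stochasticity alone only forces $\rho(\mathcal{L})=1$; it is the primitivity coming from the strong-connectivity and positive-diagonal assumptions that excludes other unit-modulus eigenvalues and thereby delivers the \emph{exponential} (rather than merely asymptotic) convergence.
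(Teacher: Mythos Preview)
Your proposal is correct and follows essentially the same approach as the paper: both arguments use strong connectivity together with positive diagonal entries to obtain primitivity, invoke Perron--Frobenius to conclude that $1$ is the unique eigenvalue of $\mathcal{L}$ on the unit circle, identify the spectrum of $\mathcal{L}-\frac{1}{N}\mathbf{1}\mathbf{1}^{T}$ with that of $\mathcal{L}$ with the eigenvalue $1$ sent to $0$, and conclude exponential convergence from the resulting spectral gap. The only cosmetic difference is that the paper reaches the spectral claim by checking $\mathbf{1}^{T}\mathbf{x}=0$ for each sub-unit eigenvector and then cites an external result for the convergence step, whereas you derive the identity $\mathcal{L}^{k}=J+\tilde{M}^{k}$ explicitly and finish with Gelfand's formula; the underlying argument is the same.
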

\vspace{6pt}
\begin{proof}
	The strongly connected property guarantees that the matrix $\mathcal{L}$ is irreducible. With Theorem 8.5.2 and Lemma 8.5.5 in \cite{Horn1985}, the matrix $\mathcal{L}$ is also primitive with positive diagonal elements. Thus, there is only one eigenvalue of $\mathcal{L}$ equal to the spectral radius 1, and the norms of all other eigenvalues are strictly less than one. Consider any eigenvector corresponding to an eigenvalue $\big|\lambda\big|<1$, denoted as ${\it\bf x}$. Then, one has
	$$
	{\bf 1}^T\mathcal{L}{\bf x}={\bf 1}^T{\bf x}=\lambda\textbf{1}^T\textbf{x},
	$$
	where the first equality follows from the fact that $\mathcal{L}$ is doubly stochastic. Therefore one has $\textbf{1}^T\textbf{x}=0$. Thus, all the eigenvalues of $\mathcal{L}$ except 1 are also eigenvalues of $\mathcal{L}-\frac{1}{N}\textbf{1}\textbf{1}^T$. With Theorem 1 in \cite{Xiao200465}, the matrix $\mathcal{L}^k$ converges to $\frac{1}{N}\textbf{1}\textbf{1}^T$ exponentially with the increase of $k$, and the convergence rate is not slower than the norm of the second largest eigenvalue of $\mathcal{L}$.
\end{proof}}
\vspace{6pt}
\begin{lemma}\label{lm2}
	(Matrix {Inversion} Lemma)
	For any matrix $P,Q,C$ of proper dimensions, if $P^{-1}$ and $Q^{-1}$ exist, then the following {equality} holds:
	$$
	\left(P^{-1}+C^TQ^{-1}C\right)^{-1}=P - PC^T\left(CPC^T+Q\right)^{-1}CP.
	$$
\end{lemma}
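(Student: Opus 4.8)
The plan is to prove the identity by direct verification rather than by deriving it abstractly: I would exhibit the right-hand side as a genuine inverse of $P^{-1}+C^TQ^{-1}C$ by multiplying the two expressions and checking that the product collapses to the identity. Since a one-sided inverse of a square matrix is automatically two-sided, it suffices to establish the product in one order. The only hypotheses needed are the stated invertibility of $P^{-1}$ and $Q^{-1}$, together with invertibility of the factor $CPC^T+Q$, which is guaranteed in the filtering setting where $Q>0$.

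Concretely, abbreviate $S:=CPC^T+Q$ and write the right-hand side as $M:=P-PC^TS^{-1}CP$. First I would expand the product $(P^{-1}+C^TQ^{-1}C)M$ into four terms, obtaining $I-C^TS^{-1}CP+C^TQ^{-1}CP-C^TQ^{-1}CPC^TS^{-1}CP$, where the leading $I$ comes from $P^{-1}P$. Every term except this $I$ carries a common factor $C^T$ on the left and $CP$ on the right, so the product can be recast as $I+C^T\bigl[-S^{-1}+Q^{-1}-Q^{-1}CPC^TS^{-1}\bigr]CP$. The decisive step is to pull $Q^{-1}$ out of the bracket as a common left factor, rewriting $S^{-1}=Q^{-1}QS^{-1}$, which turns the bracket into $Q^{-1}\bigl[I-(Q+CPC^T)S^{-1}\bigr]=Q^{-1}\bigl[I-SS^{-1}\bigr]=0$. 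Hence the entire product reduces to $I$, which proves the claim.

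The computation is elementary, so there is no genuine obstacle; the only point requiring care is the bookkeeping in the grouping step, namely recognizing the substitution $S^{-1}=Q^{-1}QS^{-1}$ that exposes the telescoping cancellation $I-SS^{-1}$. As a more conceptual alternative I would mention the block-matrix route: computing the two Schur complements of $\left[\begin{smallmatrix}P^{-1}&C^T\\ C&-Q\end{smallmatrix}\right]$ represents both sides of the identity as the same $(1,1)$ block of its inverse, but this requires invoking the Schur-complement inversion formula, whereas the direct verification above needs no external machinery.
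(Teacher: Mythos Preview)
Your verification is correct: multiplying $(P^{-1}+C^TQ^{-1}C)$ against $M=P-PC^TS^{-1}CP$ and collapsing the bracket via $-S^{-1}+Q^{-1}-Q^{-1}CPC^TS^{-1}=Q^{-1}-Q^{-1}(Q+CPC^T)S^{-1}=0$ is the standard direct check, and your remark that a one-sided inverse suffices for square matrices closes the argument.

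As for comparison with the paper: there is nothing to compare. The paper states Lemma~\ref{lm2} as the classical Matrix Inversion Lemma and provides no proof at all, treating it as a known identity to be invoked later (e.g., in the proof of Lemma~\ref{lmric}). Your write-up therefore supplies strictly more than the paper does. One minor point worth making explicit, since you flagged it yourself: the invertibility of $S=CPC^T+Q$ is not part of the lemma's hypotheses as stated, so in a self-contained proof you should either add it as an assumption or note (as you do) that in the intended application $Q>0$ forces $S>0$.
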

\vspace{6pt}
\begin{lemma}\label{lm3}
  {For any matrix $A \in \mathbb{R}^{n \times n}$}, the following inequality holds: 
	$$
	\left\|A^k\right\|_2\leq\sqrt{n}\sum_{j=0}^{n-1}\binom{n-1}{j}\binom{k}{j}\big\|A\big\|_2^j\rho\left(A\right)^{k-j},
	$$ 
	\textcolor{black}{where $\binom{m}{n}$ is the combinatorial number, $\binom{k}{j}=0$ for $j>k$, and $\rho\left(A\right)$ and $\big\|A\big\|_2$ denote the spectral radius and maximum singular value of $A$, respectively}.
\end{lemma}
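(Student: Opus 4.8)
The plan is to reduce to an upper-triangular matrix by a Schur decomposition and then estimate the entries of its powers by a combinatorial monomial-counting argument. First I would write $A=UTU^{*}$ with $U$ unitary and $T$ upper triangular, whose diagonal entries are the eigenvalues $\lambda_{1},\dots,\lambda_{n}$ of $A$. Since the spectral norm is unitarily invariant, $\|A^{k}\|_{2}=\|T^{k}\|_{2}$ and $\|A\|_{2}=\|T\|_{2}$; moreover every entry of $T$ satisfies $|T_{pq}|=|e_{p}^{*}Te_{q}|\le\|T\|_{2}=\|A\|_{2}$, while the diagonal entries obey the sharper bound $|\lambda_{p}|\le\rho(A)$. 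This already isolates the two quantities $\rho(A)$ and $\|A\|_{2}$ that appear on the right-hand side.

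Next I would split $T=D+N$, where $D=\mathrm{diag}(\lambda_{1},\dots,\lambda_{n})$ and $N$ is strictly upper triangular, and expand $T^{k}=(D+N)^{k}$ as a sum of all length-$k$ words in $D$ and $N$. Because $D$ and $N$ do not commute one cannot invoke the ordinary binomial theorem, but one can still group the words by the number $j$ of occurrences of $N$. A word with $j$ factors of $N$ has the form $D^{a_{0}}ND^{a_{1}}N\cdots ND^{a_{j}}$ with $a_{0}+\cdots+a_{j}=k-j$ and $a_{l}\ge 0$; since each $N$ raises the relevant index strictly, any product containing $n$ or more factors of $N$ vanishes. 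Hence only the terms with $0\le j\le n-1$ survive, which is exactly the summation range in the statement.

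Then I would estimate entrywise. For fixed $j$, the $(i,l)$ entry of the grouped term is a sum over strictly increasing index chains $i=p_{0}<p_{1}<\cdots<p_{j}=l$; each of the $j$ off-diagonal factors is bounded by $\|A\|_{2}$, each diagonal power by $\rho(A)$, and summing the diagonal exponents over $a_{0}+\cdots+a_{j}=k-j$ contributes the factor $\binom{k}{j}$, the number of non-negative compositions. Summing the number of admissible chains against $l$ (respectively $i$) via the hockey-stick identity $\sum_{r}\binom{r}{j-1}=\binom{\,\cdot\,}{j}$ produces the factor $\binom{n-1}{j}$ in the maximal absolute row sum, giving
\[
\|T^{k}\|_{\infty}\le\sum_{j=0}^{n-1}\binom{n-1}{j}\binom{k}{j}\|A\|_{2}^{\,j}\,\rho(A)^{\,k-j}.
\]
Finally, the elementary conversion $\|M\|_{2}\le\sqrt{n}\,\|M\|_{\infty}$ applied to $M=T^{k}$, together with $\|A^{k}\|_{2}=\|T^{k}\|_{2}$, yields the claimed inequality.

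The main obstacle is the combinatorial bookkeeping in the non-commutative expansion: because $D$ and $N$ fail to commute, the index $j$ must be tracked as the count of $N$-factors rather than through a single binomial coefficient, and one must verify both that the terms with $j\ge n$ genuinely vanish and that the two nested counts — the $\binom{k}{j}$ compositions of the diagonal exponents and the $\binom{n-1}{j}$ index chains extracted by the hockey-stick identity — combine to the exact stated coefficient. The remaining ingredients (the Schur reduction, the entrywise majorization, and the $\sqrt{n}$ norm conversion) are routine once this counting is correctly organized.
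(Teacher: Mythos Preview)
Your proposal is correct and follows the same route as the paper: Schur decomposition, the entrywise bound $|t_{ij}|\le\|A\|_2$ on the triangular factor, and the $\sqrt{n}$ conversion from the $\infty$-norm to the $2$-norm. The only difference is that the paper invokes an external reference (Corollary~3.15 in \cite{dowler2013bounding}) for the key inequality $\|A^k\|_2\le\sqrt{n}\sum_{j}\binom{n-1}{j}\binom{k}{j}M_T^{\,j}\rho(A)^{k-j}$ with $M_T=\max_{i,j}|t_{ij}|$, whereas you supply a self-contained derivation of exactly that bound via the $D+N$ expansion, the nilpotency cutoff $j\le n-1$, and the hockey-stick count.
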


The proof is given in Appendix \ref{Pflm3}.

\subsection{Problem Formulation}

{In this paper}, the fundamental problem of CM-based distributed filtering algorithm with finite fusion step $L$ is formulated {as follows:}
\begin{enumerate}
	\item {Evaluate} the effect induced by insufficient fusion on the steady-state performance of the CM-based distributed filtering algorithm.
	\item Based on the above analysis, {find out} the relationship between fusion step $L$ and performance degradation compared with the centralized optimal performance.
\end{enumerate}

\section{Main Results}\label{sec3}
In this section, {first,} the convergence property of the proposed CMDF algorithm is analyzed, including the iteration of parameter matrix and error covariance matrix. Then, the steady-state performance between the CM-based filter and the centralized filter are compared. {To do so, the following two assumptions are needed}.
\begin{assumption}\label{as1}
	The communication topology is strongly connected.
\end{assumption}
\begin{assumption}\label{as2}
	The system pair $\left(A,C\right)$ is collectively observable.
\end{assumption}

\textcolor{black}{As proposed in \cite{battistelli2014kullback, kamal2013information, battistelli2018distributed, he2020distributed}, the above two assumptions are both mild for distributed filtering problems. Generally speaking, Assumption \ref{as1} is to ensure globally average weighting of the information from the nodes of the sensor network so as to achieve the performance of  the centralized optimal case, and Assumption \ref{as2} is essential for the stability of the filtering algorithm.}
\subsection{Algorithm Convergence Analysis}
In this subsection, consider the following CM-based distributed Kalman filtering algorithm {proposed in} \cite{olfati2007distributed,Kamagarpour4738989,Battistelli2015Linear}, as shown in Algorithm \ref{alg1}, {where $N$ is the {\it a priori} parameter shared by all nodes in the sensor network, and the doubly stochastic matrix $\mathcal{L}$ can be shared as global parameters or obatined in a distributed way when the communication graph is strongly connected \cite{GHARESIFARD2012539}.}
\begin{algorithm}
	\caption{Distributed Information Fusion Algorithm}
	\label{alg1}
	\textbf{Input:}\\
	$\hat{x}_{i,0|0}, P_{i,0|0}\quad i=1,2,\dots,N$\\
	\textbf{Prediction:}\\
	$\hat{x}_{i,k|k-1}=A\hat{x}_{i,k-1|k-1}\\ P_{i,k|k-1}=AP_{i,k-1|k-1}A^T+Q$\\
	\textbf{Fusion:}\\
	Set 
	$$
		S_{i,k}^{(0)}=NC_i^TR_i^{-1}C_i, \qquad I_{i,k}^{(0)}=NC_i^TR_i^{-1}y_{i,k}
	$$\\
	For $m=1,2,\dots,L$
	$$
	\begin{aligned}
		S_{i,k}^{(m)}=\sum_{j=1}^{N}l_{ij}S_{j,k}^{(m-1)}, \qquad I_{i,k}^{(m)}=\sum_{j=1}^{N}l_{ij}I_{j,k}^{(m-1)}\;\;
	\end{aligned}
	$$
	\textbf{Correction:}\\
	$P_{i,k|k}=\big(P_{i,k|k-1}^{-1}+S_{i,k}^{(L)}\big)^{-1}$\\
	$\hat{x}_{i,k|k}=P_{i,k|k}\big(P_{i,k|k-1}^{-1}\hat{x}_{i,k|k-1}+I_{i,k}^{(L)}\big)$\\
\end{algorithm}

 {With this algorithm, as} the number of fusion step $L$ tends to infinity, each sensor node can precisely obtain the information $\sum_{i=1}^{N}C_i^TR_i^{-1}C_i$ and $\sum_{i=1}^{N}C_i^TR_i^{-1}y_{i,k}$ in a distributed way. Based on this premise, Kamgarpour and Tomlin \cite{Kamagarpour4738989} proved that, with $k$ tending to infinity, the matrix $P_{i,k|k-1}$ converges to the centralized optimal steady-state performance, i.e., the solution $P$ to the DARE 
\begin{equation}\label{dare}
	P=APA^T+Q-APC^T\left(CPC^T+R\right)^{-1}CPA^T.
\end{equation}

However, due to the limitation of the communication rate, it is impossible to communicate with the neighboring sensor node {for infinitely many times} between two sampling instants. Thus, {the objective of this subsection} is to derive the convergence property of the algorithm based on {a finite fusion step $L$.}

With Algorithm \ref{alg1}, the iteration of {$P_{i,k|k}^{-1}$} can be formulated as
\begin{equation}
	P_{i,k|k}^{-1}=P_{i,k|k-1}^{-1}+N\sum_{j=1}^{N}l_{ij}^{(L)}C_{j}^{T}R_{j}^{-1}C_{j}.
\end{equation}
{Let} the modified observation matrix for each sensor {be}
$$
\tilde{C}_i^{(L)}=\Big[sign\big(l_{i1}^{(L)}\big) C_1^T,\cdots,sign\big(l_{iN}^{(L)}\big)C_N^T\Big]^T,
$$
and the modified noise covariance matrix {be}
\textcolor{black}{
$$
\begin{aligned}
	\bar{R}_{i}^{(L)}&=diag\left(sign\big(l_{i1}^{(L)}\big)R_{1},\cdots,sign\big(l_{iN}^{(L)}\big)R_{N}\right)\\
	\tilde{R}_{i}^{(L)}&=diag\left(\frac{1}{Nl_{i1}^{(L)}}R_{1},\cdots,\frac{1}{Nl_{iN}^{(L)}}R_{N}\right),
\end{aligned}
$$
where 
$$
sign\left(x\right)=\left\{\begin{aligned}
	-&1,\quad x<0\\
	&0,\quad x=0\\
	&1,\quad x>0
\end{aligned}\right.,
$$ 
and $\frac{1}{Nl_{i1}^{(L)}}$ is set to $0$ if the denominator $Nl_{ij}^{(L)}=0$.}
As $\mathcal{L}^L$ is a non-negative matrix, i.e., $\mathcal{L}^L\unrhd 0$, the term $sign\big(l_{ij}^{(L)}\big)$ only takes the value of $0$ or $1$. {The following Lemma describes} the convergence property of the iteration $P_{i,k+1|k}$.
\vspace{5pt}
{\begin{lemma}\label{lmric}
	If $\big(A,\tilde{C}_i^{(L)}\big)$ is observable for all $i\in\mathcal{V}$, then $P_{i,k+1|k}$ converges to the solution of the DARE 
	\begin{equation}\label{ricori}
		\begin{aligned}
			P_i^{(L)}=&AP_i^{(L)}A^{T}+Q-AP_i^{(L)}\big(\tilde{C}_i^{(L)}\big)^T\\
			&\times\Big(\tilde{C}_i^{(L)} P_i^{(L)}\big(\tilde{C}_i^{(L)}\big)^T+\tilde{R}_i^{(L)}\Big)^{-1}\tilde{C}_i^{(L)} P_i^{(L)}A^{T},
		\end{aligned}
	\end{equation}
	i.e.,
    $$
     \lim_{k\to \infty}P_{i,k+1|k}=P_{i}^{(L)},\qquad \forall i\in\mathcal{V}.
    $$
\end{lemma}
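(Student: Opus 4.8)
The plan is to recognize that, once the fused information increment is written in factored form, the one-step covariance recursion for each sensor is exactly the standard Kalman prediction-form Riccati recursion for a fictitious linear-Gaussian system with observation pair $\big(A,\tilde{C}_i^{(L)}\big)$, process-noise covariance $Q$, and measurement-noise covariance $\tilde{R}_i^{(L)}$. Once this identification is made, the conclusion follows from the classical convergence theory of the discrete-time Riccati recursion.

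First I would rewrite the fused information term in factored form. Using the block-diagonal structure of $\tilde{R}_i^{(L)}$ together with the definition of $\tilde{C}_i^{(L)}$, the $j$-th block contributes $\mathrm{sign}\big(l_{ij}^{(L)}\big)^2\cdot N l_{ij}^{(L)}\cdot C_j^T R_j^{-1} C_j$; since $\mathcal{L}^L\unrhd 0$, the factor $\mathrm{sign}\big(l_{ij}^{(L)}\big)^2$ equals $1$ precisely when $l_{ij}^{(L)}>0$ (and the zero-denominator convention makes vanishing blocks drop out consistently), so that
$$N\sum_{j=1}^{N} l_{ij}^{(L)} C_j^T R_j^{-1} C_j = \big(\tilde{C}_i^{(L)}\big)^T \big(\tilde{R}_i^{(L)}\big)^{-1} \tilde{C}_i^{(L)}.$$
Substituting into the correction step gives the information-form update
$$P_{i,k|k}^{-1} = P_{i,k|k-1}^{-1} + \big(\tilde{C}_i^{(L)}\big)^T \big(\tilde{R}_i^{(L)}\big)^{-1} \tilde{C}_i^{(L)}.$$
I would then eliminate $P_{i,k|k}$ between this step and the prediction $P_{i,k+1|k}=AP_{i,k|k}A^T+Q$: applying the Matrix Inversion Lemma (Lemma \ref{lm2}) converts the information-form correction into covariance form, and composing with the prediction yields precisely the prediction-form Riccati recursion whose fixed point is the DARE \eqref{ricori} with $C\to\tilde{C}_i^{(L)}$ and $R\to\tilde{R}_i^{(L)}$.

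Finally I would invoke the classical convergence theorem for the discrete-time Riccati recursion: if $\big(A,\tilde{C}_i^{(L)}\big)$ is detectable and $\big(A,Q^{1/2}\big)$ is stabilizable, then the recursion converges from any admissible initial condition $P_{i,0|0}\ge 0$ to the unique stabilizing positive-semidefinite solution $P_i^{(L)}$ of \eqref{ricori}. The observability of $\big(A,\tilde{C}_i^{(L)}\big)$ assumed in the lemma implies detectability, and $Q>0$ makes $Q^{1/2}$ invertible, hence $\big(A,Q^{1/2}\big)$ is controllable and a fortiori stabilizable, so both hypotheses are met and the stated limit $\lim_{k\to\infty}P_{i,k+1|k}=P_i^{(L)}$ holds for every $i\in\mathcal{V}$.

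The main obstacle I anticipate is not the convergence invocation but the careful bookkeeping in the factorization step: one must verify that the sign and zero-denominator conventions make the block identity hold even when some $l_{ij}^{(L)}=0$, so that the fictitious measurement $\tilde{C}_i^{(L)}$ with noise $\tilde{R}_i^{(L)}$ carries exactly the information actually fused and no more. A secondary point worth stressing is that the hypothesis is observability of $\big(A,\tilde{C}_i^{(L)}\big)$ rather than of $\big(A,C\big)$, since $\tilde{C}_i^{(L)}$ stacks only the blocks $C_j$ reachable from node $i$ within $L$ consensus steps; this is exactly the place where Assumptions \ref{as1}--\ref{as2} and the graph diameter later enter to guarantee the hypothesis for sufficiently large $L$.
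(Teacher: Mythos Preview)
Your proposal is correct and follows essentially the same route as the paper: factor the fused information increment as $\big(\tilde{C}_i^{(L)}\big)^T\big(\tilde{R}_i^{(L)}\big)^{-1}\tilde{C}_i^{(L)}$, apply the Matrix Inversion Lemma (Lemma~\ref{lm2}) to pass from information form to covariance form, recognize the resulting recursion as the standard prediction-form Riccati iteration, and invoke the classical convergence theorem (the paper cites \cite{Andersonoptimal}). Your explicit mention of the stabilizability hypothesis via $Q>0$ and your care with the zero-block conventions are welcome additions; the paper handles the latter point separately in Remark~1.
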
}
\begin{proof}
	{The iteration of $P_{i,k+1|k}$ can be reformulated as}
	
	\begin{small}
		\begin{equation}\label{eqreformu}
			\begin{aligned}
				&P_{i,k+1|k}=A\left(P_{i,k|k-1}^{-1}+\big(\tilde{C}_i^{(L)}\big)^T\big(\tilde{R}_{i}^{(L)}\big)^{-1}\tilde{C}_i^{(L)}\right)^{-1}A^{T}+Q.\\
			\end{aligned}	
		\end{equation}
	\end{small}

{With Lemma \ref{lm2}, the following equality is obtained:}

{
$$
	\begin{aligned}
		&\Big(P_{i,k|k-1}^{-1}+\big(\tilde{C}_i^{(L)}\big)^T\big(\tilde{R}_{i}^{(L)}\big)^{-1}\tilde{C}_i^{(L)}\Big)^{-1}\\
		&=P_{i,k|k-1}-P_{i,k|k-1}\big(\tilde{C}_i^{(L)}\big)^T\\
	&\quad\times\Big(\tilde{C}_i^{(L)} P_{i,k|k-1}\big(\tilde{C}_i^{(L)}\big)^T+\tilde{R}_i^{(L)}\Big)^{-1}\tilde{C}_i^{(L)} P_{i,k|k-1}.
\end{aligned}
$$}
{The equation \eqref{eqreformu} can be reformulated as
$$
\begin{aligned}
	&P_{i,k+1|k}=AP_{i,k|k-1}A^T+Q-AP_{i,k|k-1}\big(\tilde{C}_i^{(L)}\big)^T\\
	&\quad\times\Big(\tilde{C}_i^{(L)} P_{i,k|k-1}\big(\tilde{C}_i^{(L)}\big)^T+\tilde{R}_i^{(L)}\Big)^{-1}\tilde{C}_i^{(L)} P_{i,k|k-1}A^{T}.\\
\end{aligned}
$$}

{Thus, equation \eqref{ricori} and the steady-state form of equation \eqref{eqreformu} are equivalent}. With the result in \cite{Andersonoptimal}, if $\big(A,\tilde{C}_i^{(L)}\big)$ is observable, then $P_{i,k+1|k}$ converges to the solution of the DARE \eqref{ricori}.
\end{proof}
\vspace{6pt}
\begin{remark}
	If the number of fusion step is less than the diameter of the communication topology, i.e., $L<d$, \textcolor{black}{then the matrix $\tilde{R}_i^{(L)}$ is not invertible since some $l_{ij}^{(L)}$ may be $0$. However, {one can eliminate the corresponding zero blocks} in $\tilde{C}_i^{(L)}$ and $\tilde{R}_i^{(L)}$ to make the modified $\tilde{R}_i^{(L)}$ invertible, or replace the inverse sign with generalized inverse, i.e., $\big(\tilde{R}_i^{(L)}\big)^{\dagger}$.  These two kinds of modification do not affect the observability of the pair $\big(A,\tilde{C}_i^{(L)}\big)$ and the equivalent relationship $$\big(\tilde{C}_i^{(L)}\big)^T\big(\tilde{R}_i^{(L)}\big)^{\dagger}\tilde{C}_i^{(L)}=N\sum_{j=1}^{N}l_{ij}^{(L)}C_{j}^{T}R_{j}^{-1}C_{j}.$$ 
   {In order to simplify the notation, $\tilde{C}_i^{(L)}$, $\tilde{R}_i^{(L)}$, $\big(\tilde{R}_i^{(L)}\big)^{-1}$  and $\bar{R}_i^{(L)}$ will be kept to denote the modified observation and noise matrices.}}
\end{remark}
\vspace{6pt}

With the DARE \eqref{ricori}, it is clear that the steady-state performance of the CMDF algorithm is closely related to the number of fusion step $L$ between two sampling instants. Furthermore, the matrix $P_i^{(L)}$ is determined by $A,\tilde{C}_i^{(L)},\tilde{R}_i^{(L)}$, which is apparently not the real state estimation error covariance matrix for sensor $i$. Therefore, in addition to the above analysis on $P_{i,k+1|k}$, {it also needs to analyze} the steady-state performance of the real error covariance matrix of each sensor $i$.

Denote the estimation error of sensor node $i$ as $\tilde{x}_{i,k|k-1}=x_k-\hat{x}_{i,k|k-1}$, $\tilde{x}_{i,k|k}=x_{k}-\hat{x}_{i,k|k}$, and denote the estimation error covariance matrix as $\tilde{P}_{i,k|k-1}=\mathbb{E}\big\lbrace\tilde{x}_{i,k|k-1}\tilde{x}_{i,k|k-1}^{T}\big\rbrace$, $\tilde{P}_{i,k|k}=\mathbb{E}\big\lbrace\tilde{x}_{i,k|k}\tilde{x}_{i,k|k}^{T}\big\rbrace$.
{
\begin{theorem}\label{thmconverge}
	If $\big(A,\tilde{C}_i^{(L)}\big)$ is observable for $i\in\mathcal{V}$, then the iteration of $\tilde{P}_{i,k|k-1}$ will converge to the solution to the discrete-time Lyapunov equation (DLE)
	\begin{equation}\label{lyaeq}
		\begin{aligned}
			&\tilde{P}_i^{(L)}=\tilde{A}_{P_i^{(L)}}\tilde{P}_i^{(L)}\tilde{A}_{P_i^{(L)}}^T+Q+K_{P_i^{(L)}}\bar{R}_iK_{P_i^{(L)}}^T,\\
		\end{aligned}
	\end{equation}
	where
		$$
		\begin{aligned}
			\tilde{A}_{P_i^{(L)}}\triangleq A-&AP_i^{(L)}\big(\tilde{C}_i^{(L)}\big)^{T}\\
			&\times\Big(\tilde{C}_i^{(L)}\tilde{P}_i^{(L)}\big(\tilde{C}_i^{(L)}\big)^T+\tilde{R}_i^{(L)}\Big)^{-1}\tilde{C}_i^{(L)}
		\end{aligned}
		$$ 
		$$
		\begin{aligned}
			\quad\; K_{P_i^{(L)}}\triangleq\; &AP_{i}^{(L)}\big(\tilde{C}_i^{(L)}\big)^T\\
			&\quad\;\times\Big(\tilde{C}_i^{(L)}P_{i}^{(L)}\big(\tilde{C}_i^{(L)}\big)^T+\tilde{R}_{i}^{(L)}\Big)^{-1}.\;\qquad\quad
		\end{aligned}
		$$
\end{theorem}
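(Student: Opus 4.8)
The plan is to derive the exact recursion satisfied by the \emph{actual} error covariance $\tilde{P}_{i,k|k-1}$ and then pass to the limit using Lemma \ref{lmric}. First I would rewrite the correction step of Algorithm \ref{alg1} in innovation form. Using $S_{i,k}^{(L)}=\big(\tilde{C}_i^{(L)}\big)^T\big(\tilde{R}_i^{(L)}\big)^{-1}\tilde{C}_i^{(L)}$ and $I_{i,k}^{(L)}=\big(\tilde{C}_i^{(L)}\big)^T\big(\tilde{R}_i^{(L)}\big)^{-1}\tilde{y}_{i,k}$, where $\tilde{y}_{i,k}=\tilde{C}_i^{(L)}x_k+\tilde{v}_{i,k}$ is the stacked measurement of the active neighbours (with $\tilde{v}_{i,k}$ stacking $sign\big(l_{ij}^{(L)}\big)v_{j,k}$), the correction becomes a standard Kalman update with gain $G_{i,k}=P_{i,k|k-1}\big(\tilde{C}_i^{(L)}\big)^T\big(\tilde{C}_i^{(L)}P_{i,k|k-1}\big(\tilde{C}_i^{(L)}\big)^T+\tilde{R}_i^{(L)}\big)^{-1}$. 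Substituting $x_{k+1}=Ax_k+\omega_k$ then yields the linear error recursion
$$
\tilde{x}_{i,k+1|k}=A\big(I-G_{i,k}\tilde{C}_i^{(L)}\big)\tilde{x}_{i,k|k-1}-AG_{i,k}\tilde{v}_{i,k}+\omega_k .
$$

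Next I would take second moments. Since $\tilde{x}_{i,k|k-1}$ depends only on noise up to time $k-1$, it is uncorrelated with the current $\tilde{v}_{i,k}$ and $\omega_k$, so with $\Phi_{i,k}\triangleq A\big(I-G_{i,k}\tilde{C}_i^{(L)}\big)$ one obtains
$$
\tilde{P}_{i,k+1|k}=\Phi_{i,k}\tilde{P}_{i,k|k-1}\Phi_{i,k}^T+Q+AG_{i,k}\bar{R}_i^{(L)}G_{i,k}^TA^T .
$$
The crucial point, and the one that turns a Riccati recursion into a Lyapunov one, is that $G_{i,k}$ is built from the \emph{mismatched} covariance $\tilde{R}_i^{(L)}$ used by the algorithm, whereas the true covariance of $\tilde{v}_{i,k}$ is $\bar{R}_i^{(L)}$. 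Because $G_{i,k}$ is not the optimal gain for the true noise statistics, the measurement-noise term does not collapse into the predicted covariance, and the recursion stays affine (rather than Riccati) in $\tilde{P}_{i,k|k-1}$.

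Then I would pass to the limit. By Lemma \ref{lmric}, $P_{i,k|k-1}\to P_i^{(L)}$, hence $G_{i,k}\to G_i^{(L)}$, $\Phi_{i,k}\to \tilde{A}_{P_i^{(L)}}=A-K_{P_i^{(L)}}\tilde{C}_i^{(L)}$, and $AG_{i,k}\to K_{P_i^{(L)}}$, so the limiting recursion is the time-invariant affine map $X\mapsto \tilde{A}_{P_i^{(L)}}X\tilde{A}_{P_i^{(L)}}^T+Q+K_{P_i^{(L)}}\bar{R}_i^{(L)}K_{P_i^{(L)}}^T$, whose fixed point is exactly \eqref{lyaeq}. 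To justify convergence I would invoke standard DARE theory: since $Q>0$ makes $\big(A,Q^{1/2}\big)$ controllable and $\big(A,\tilde{C}_i^{(L)}\big)$ is observable by hypothesis, the DARE \eqref{ricori} has a unique stabilizing solution $P_i^{(L)}$ for which the closed-loop matrix $\tilde{A}_{P_i^{(L)}}$ is Schur stable. Schur stability makes the affine map a contraction, so the DLE \eqref{lyaeq} has a unique positive semi-definite solution $\tilde{P}_i^{(L)}$.

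The hard part will be the last passage: showing that the genuinely time-varying iteration for $\tilde{P}_{i,k|k-1}$, whose coefficients $\Phi_{i,k}$ only converge to $\tilde{A}_{P_i^{(L)}}$, actually converges to the fixed point of the limiting constant-coefficient iteration. I expect to handle this by a perturbation argument: setting $E_k=\tilde{P}_{i,k|k-1}-\tilde{P}_i^{(L)}$, the error obeys $E_{k+1}=\tilde{A}_{P_i^{(L)}}E_k\tilde{A}_{P_i^{(L)}}^T+\Delta_k$, where $\Delta_k$ collects the vanishing differences $\Phi_{i,k}-\tilde{A}_{P_i^{(L)}}$ and $AG_{i,k}-K_{P_i^{(L)}}$, both controlled by $\big\|P_{i,k|k-1}-P_i^{(L)}\big\|$, which decays by Lemma \ref{lmric}. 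Schur stability of $\tilde{A}_{P_i^{(L)}}$ together with $\Delta_k\to 0$ then forces $E_k\to 0$ by the usual asymptotic-stability argument for a linear recursion with a vanishing forcing term, completing the proof.
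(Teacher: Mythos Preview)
Your proposal is correct and follows essentially the same route as the paper: derive the linear error recursion for $\tilde{x}_{i,k+1|k}$, take second moments to obtain the time-varying Lyapunov recursion for $\tilde{P}_{i,k+1|k}$ with the mismatched noise term $\bar{R}_i^{(L)}$, use Lemma~\ref{lmric} to get convergence of the coefficients $\Phi_{i,k}\to\tilde{A}_{P_i^{(L)}}$ and $AG_{i,k}\to K_{P_i^{(L)}}$, and conclude via Schur stability of $\tilde{A}_{P_i^{(L)}}$. The only difference is the last step: the paper simply invokes Theorem~1 of \cite{cattivelli2010diffusion} for the convergence of a Lyapunov recursion with converging coefficients, whereas you spell out a perturbation argument. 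One small point to tighten in your version: in $E_{k+1}=\tilde{A}_{P_i^{(L)}}E_k\tilde{A}_{P_i^{(L)}}^T+\Delta_k$, the residual $\Delta_k$ contains cross terms like $(\Phi_{i,k}-\tilde{A}_{P_i^{(L)}})\tilde{P}_{i,k|k-1}\Phi_{i,k}^T$, so $\Delta_k\to 0$ presupposes an a~priori bound on $\tilde{P}_{i,k|k-1}$; this follows from $\Phi_{i,k}\to\tilde{A}_{P_i^{(L)}}$ Schur stable (hence the transition products are eventually contractive), but it should be stated before the perturbation step.
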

\vspace{6pt}
	The proof is given in Appendix \ref{Pfconverge}.}
\vspace{6pt}
\begin{remark}
    For CMDF, {the explicit values of 
    $\sum_{i=1}^{N} C_{i}^{T}\cdot$ $R_{i}^{-1}y_{i}$ and $\sum_{i=1}^{N}C_{i}^TR_{i}^{-1}C_{i}$} in each sampling instant cannot be obtained by each node with finite fusion step $L$.  
    Compared with {the centralized optimal setting}, the effect of imprecise information on the state estimation performance will accumulate, and \textcolor{black}{the corresponding degraded estimation performance will become steady}. A more in-depth discussion on the performance degradation induced by finite  fusion step $L$ {is proposed in Theorem \ref{thmric} and \ref{thmcov}}.
    
    \textcolor{black}{In addition to these, Theorem 1 also provides the least requirement on the stability of the CM-based distributed filtering algorithm. Denote the L-step neighbor of node $i$ as the nodes that can reach node i in $L$ steps. Lemma 4 and Theorem 1 indicate that the lowest bound of $L$ for the boundedness of $\tilde{P}_{i,k|k-1}, \forall i \in \mathcal{V}$, is that each pair $(A,\tilde{C}_i^{(L)})$ is observable for all $i\in\mathcal{V}$, namely, for each node $i\in\mathcal{V}$, its L-step neighbors need to be collectively observable. Note that if the fusion is not sufficient enough, i.e., the pairs $\big(A,\tilde{C}_i^{(L)}\big)$ are not observable for some $i \in \mathcal{V}$, then the real covariance matrix $\tilde{P}_{i,k+1|k}^{(L)}$ may diverge for an unstable $A$. Meanwhile, for not very large $L$, possible inconsistency of the CM-based algorithm may be induced by the overconfidence on the observation information $y_{j,k}$ with $Nl_{ij}^{(L)}$ larger than 1. This may further lead to a relatively large gap between $P_{i}^{(L)}$ and $\tilde{P}_i^{(L)}$ and affect the steady-state performance of the filtering algorithm. However, the CM-based distributed filtering algorithm can achieve the centralized optimal performance with any degree of precision for sufficiently large $L$, as shown below.}
\end{remark}


\textcolor{black}{Now, one is ready to complete the analysis about the convergence of $P_{i,k|k-1}$ and $\tilde{P}_{i,k|k-1}$ to the centralized performance,} and to complete {the formulation of} the steady-state performance of iterations as the solutions to DARE (\ref{ricori}) and DLE (\ref{lyaeq}), respectively.

When $L$ is larger than the diameter of the communication graph, i.e., $L\ge d$, one has $l_{ij}^{(L)}>0$ for all $i,j$ \cite{battistelli2014kullback}. {Thus, one can} rewrite the DARE (\ref{ricori}) for each sensor node as
$$
	\begin{aligned}
		P_i^{(L)}=&AP_i^{(L)}A^{T}+Q\\
		&-AP_i^{(L)}C^{T}\big(C P_i^{(L)}C^T+\tilde{R}_i^{(L)}\big)^{-1}C P_i^{(L)}A^{T}.\\
	\end{aligned}
$$
It is obvious that, \textcolor{black}{compared to the centralized optimal setting}, the solution of the matrix $P_{i}^{(L)}$ is based on a mismatched $\tilde{R}_i^{(L)}$ instead of $R$. Therefore, the effect of non-sufficient fusion is formulated as the mismatch of noise covariance matrix $R$. In the next subsection, the effect of mismatched $R$ on the properties of $P_{i}^{(L)}$ and $\tilde{P}_{i}^{(L)}$ with {$L\ge d$  will be further discussed.}
\subsection{Performance Analysis}
Before discussing the comparison of filtering performance, the following {two Lemmas are established} to illustrate the properties of solutions to DARE. 
{For simplicity}, use the term $dare\left(A,C,Q,R\right)$ to denote the solution $P$ to DARE (\ref{dare}). 
\vspace{6pt}
\begin{lemma}\label{lmmono}
	{The solution $P$ to the DARE \eqref{dare} is monotonically increasing with $R$}, i.e., if $R_1\ge R_2>0$, $P_1=dare\left(A,C,Q,R_1\right)$ and $P_2=dare\left(A,C,Q,R_2\right)$, then $P_1\ge P_2$.
\end{lemma}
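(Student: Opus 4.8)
The plan is to realize $P_1$ and $P_2$ as the limits of the Riccati recursion associated with \eqref{dare} and to exploit two monotonicity properties of its one-step map. Write the prediction-form Riccati operator as $f_R(X)=AXA^T+Q-AXC^T(CXC^T+R)^{-1}CXA^T$, so that $P=\text{dare}(A,C,Q,R)$ is precisely the fixed point $f_R(P)=P$, and recall from \cite{Andersonoptimal} (and the discussion preceding Lemma \ref{lmric}) that, under the detectability/stabilizability conditions guaranteeing existence of the stabilizing solution, the iteration $X_{k+1}=f_R(X_k)$ converges to that solution from any $X_0\ge 0$.

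First I would record the variational (completion-of-squares) form of the operator,
$$
f_R(X)=\min_{\mathcal{K}}\Big[(A-\mathcal{K}C)X(A-\mathcal{K}C)^T+\mathcal{K}R\mathcal{K}^T+Q\Big],
$$
the minimum (in the Loewner order) being attained at the Kalman predictor gain; equivalently, by the matrix inversion lemma (Lemma \ref{lm2}) the measurement-update part equals $(X^{-1}+C^TR^{-1}C)^{-1}$. From either representation the two facts I need follow. \emph{Monotonicity in the iterate}: if $X\ge X'$ then $f_R(X)\ge f_R(X')$, because for each fixed $\mathcal{K}$ the bracket is order-preserving in $X$ and the pointwise minimum over $\mathcal{K}$ inherits this property. \emph{Monotonicity in the noise level}: if $R\ge R'$ then $f_R(X)\ge f_{R'}(X)$ for every $X$, by the same argument applied to the $\mathcal{K}R\mathcal{K}^T$ term, or directly because $R\mapsto C^TR^{-1}C$ is order-reversing while $X\mapsto(X^{-1}+\cdot)^{-1}$ is order-reversing in its argument.

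With these in hand I would run both recursions from the common start $X_0=0$, writing $X_k^{(j)}=f_{R_j}^{\,k}(0)$ for $j=1,2$, and prove $X_k^{(1)}\ge X_k^{(2)}$ by induction: the base case is equality, and in the inductive step
$$
X_{k+1}^{(1)}=f_{R_1}\big(X_k^{(1)}\big)\ge f_{R_1}\big(X_k^{(2)}\big)\ge f_{R_2}\big(X_k^{(2)}\big)=X_{k+1}^{(2)},
$$
where the first inequality uses monotonicity in the iterate (since $X_k^{(1)}\ge X_k^{(2)}$) and the second uses monotonicity in the noise level (since $R_1\ge R_2$). Letting $k\to\infty$ and invoking convergence of each recursion to its DARE solution gives $P_1\ge P_2$.

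The main obstacle I anticipate is not the induction but the limiting step: one must ensure both recursions converge to the \emph{stabilizing} DARE solutions, so that the pointwise inequality $X_k^{(1)}\ge X_k^{(2)}$ survives the limit; this is exactly where the existence/uniqueness theory behind Lemma \ref{lmric} is invoked. A convenient variant that sidesteps the common initialization is to start the $R_1$-recursion at $X_0=P_2$: since $f_{R_1}(P_2)\ge f_{R_2}(P_2)=P_2$, monotonicity in the iterate makes $f_{R_1}^{\,k}(P_2)$ a nondecreasing sequence converging to $P_1$, whence $P_1\ge P_2$ directly.
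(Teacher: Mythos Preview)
Your argument is correct and is the standard comparison technique for Riccati recursions: the variational identity
\[
f_R(X)=\min_{\mathcal{K}}\bigl[(A-\mathcal{K}C)X(A-\mathcal{K}C)^T+\mathcal{K}R\mathcal{K}^T+Q\bigr]
\]
gives both monotonicities at once, the induction from a common start (or from $X_0=P_2$) is clean, and the limiting step is justified by the same convergence result the paper already invokes in Lemma~\ref{lmric} (observability of $(A,C)$ together with $Q>0$).

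There is nothing to compare against here, because the paper does not supply its own proof of Lemma~\ref{lmmono}; it simply observes that the argument is ``similar to the proof of Theorem~1 in \cite{zhisheng2012effects}'' and omits it. That reference uses essentially the same monotone--iteration idea you wrote out, so your proposal is in the spirit of what the authors had in mind. One small remark: in your second variant (starting the $R_1$-recursion at $P_2$), the passage from ``nondecreasing'' to ``convergent to $P_1$'' still relies on the global convergence of the Riccati iteration to the stabilizing solution from any $X_0\ge 0$, so it does not really avoid the limiting issue you flagged; it just repackages it.
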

\vspace{6pt}
The proof of this Lemma is similar to the proof of Theorem 1 in \cite{zhisheng2012effects}, thus is omitted.

{Similarly, denote $\tilde{A}_{P}=A-APC^T\left(CPC^T+R\right)^{-1}C$}, where $P$ is the solution to DARE \eqref{dare}.
\vspace{5pt}
{\begin{lemma}\label{lmstable}
     The matrix $\tilde{A}_P$ is Schur stable and 
	 $$
	 \begin{aligned}
		\rho\big(\tilde{A}_P\big)\leq \sqrt{1-\frac{\lambda_{min}\left(Q\right)}{\lambda_{max}\left(P\right)}}\;,\quad
		\big\|\tilde{A}_P\big\|_2\leq \sqrt{\frac{\lambda_{max}\left(P\right)}{\lambda_{min}\left(Q\right)}}.
	 \end{aligned}
	 $$ 
\end{lemma}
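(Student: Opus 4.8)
The plan is to reduce the whole statement to a single Lyapunov-type identity satisfied by the DARE solution, and then to read off Schur stability together with both quantitative bounds from it. Write $K_P=APC^T\left(CPC^T+R\right)^{-1}$ for the centralized gain, so that $\tilde{A}_P=A-K_PC$.

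First I would establish the identity
\begin{equation}\label{lyapid}
	P=\tilde{A}_P P\tilde{A}_P^T+Q+K_P R K_P^T.
\end{equation}
This is a purely algebraic completion-of-the-square: expanding $\left(A-K_PC\right)P\left(A-K_PC\right)^T$, the two cross terms $APC^TK_P^T$ and $K_PCPA^T$ each equal $APC^T\left(CPC^T+R\right)^{-1}CPA^T$, while the remaining quadratic term combines with $K_PRK_P^T$ through $K_P\left(CPC^T+R\right)K_P^T=APC^T\left(CPC^T+R\right)^{-1}CPA^T$; substituting the DARE \eqref{dare} then yields \eqref{lyapid}. I expect this step to be the main (and essentially the only) real obstacle, since everything afterwards is a direct consequence. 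As an immediate by-product, \eqref{lyapid} gives $P\ge Q$, a fact I will reuse.

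Second, since $Q>0$ and $K_PRK_P^T\ge 0$, identity \eqref{lyapid} yields $P-\tilde{A}_P P\tilde{A}_P^T=Q+K_PRK_P^T>0$ with $P>0$, which is exactly a discrete Lyapunov inequality; by the standard discrete-time Lyapunov stability theorem this forces $\tilde{A}_P$ to be Schur stable. For the spectral-radius bound, let $\lambda$ be any eigenvalue of $\tilde{A}_P$ and $w\ne 0$ a corresponding eigenvector of $\tilde{A}_P^T$, so that $w^*\tilde{A}_P=\bar{\lambda}w^*$. Multiplying \eqref{lyapid} on the left by $w^*$ and on the right by $w$ gives $w^*Pw=\big|\lambda\big|^2 w^*Pw+w^*\left(Q+K_PRK_P^T\right)w$, whence
\[
	1-\big|\lambda\big|^2=\frac{w^*\left(Q+K_PRK_P^T\right)w}{w^*Pw}\ge\frac{w^*Qw}{w^*Pw}\ge\frac{\lambda_{min}\left(Q\right)}{\lambda_{max}\left(P\right)},
\]
using $w^*Qw\ge\lambda_{min}\left(Q\right)\left\|w\right\|^2$ and $w^*Pw\le\lambda_{max}\left(P\right)\left\|w\right\|^2$. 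Maximizing over $\lambda$ gives the claimed bound on $\rho\big(\tilde{A}_P\big)$.

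Finally, for the $2$-norm bound I would drop the positive-semidefinite term in \eqref{lyapid} to get $\tilde{A}_P P\tilde{A}_P^T\le P$. Setting $M=P^{-1/2}\tilde{A}_P P^{1/2}$ converts this into $MM^T\le I$, so $\left\|M\right\|_2\le 1$; then $\tilde{A}_P=P^{1/2}MP^{-1/2}$ yields $\big\|\tilde{A}_P\big\|_2\le\big\|P^{1/2}\big\|_2\left\|M\right\|_2\big\|P^{-1/2}\big\|_2\le\sqrt{\lambda_{max}\left(P\right)/\lambda_{min}\left(P\right)}$. Invoking $P\ge Q$, hence $\lambda_{min}\left(P\right)\ge\lambda_{min}\left(Q\right)$, upgrades the denominator and delivers the stated bound $\big\|\tilde{A}_P\big\|_2\le\sqrt{\lambda_{max}\left(P\right)/\lambda_{min}\left(Q\right)}$, completing the proof.
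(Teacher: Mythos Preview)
Your proof is correct. The Lyapunov identity \eqref{lyapid} and the spectral-radius argument via a left eigenvector are exactly what the paper does. The only genuine difference is in the $2$-norm bound: the paper argues via the singular value decomposition $\tilde{A}_P=U\Sigma V^T$, picks the row vector $x=e_1^TU^T$ associated with the largest singular value, and reads off $\sigma_1^2\left(V^TPV\right)_{1,1}\le xPx^T$ directly from the identity, whence $\sigma_1^2\le\lambda_{max}(P)/\lambda_{min}(P)$; you instead congruence-transform $\tilde{A}_PP\tilde{A}_P^T\le P$ by $P^{-1/2}$ to obtain a contraction $M=P^{-1/2}\tilde{A}_PP^{1/2}$ and then bound $\big\|\tilde{A}_P\big\|_2$ submultiplicatively. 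Both routes land on the same intermediate estimate $\sqrt{\lambda_{max}(P)/\lambda_{min}(P)}$ before invoking $P\ge Q$. Your similarity-transformation argument is arguably cleaner and avoids the SVD bookkeeping, while the paper's SVD approach stays closer to the eigenvector technique used for the spectral radius; neither gains a sharper constant.
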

\vspace{6pt}
	The proof is given in Appendix \ref{Pflmstable}.
}
\vspace{6pt}
\begin{remark}
	 Lemma \ref{lmstable} shows the property of the feedback matrix $\tilde{A}_{P}$, {in terms of} spectral radius and singular values. This Lemma is of vital importance to analyze the uniform property of the solutions to a group of Riccati Equations, {as can be seen} in the proof of Theorem \ref{thmric}.
\end{remark}
\vspace{6pt}

In order to discuss the performance degradation due to the effect of insufficient fusion, {it needs to compare} the difference between two DARE first.
{Introduce two Riccati Equations with different $R$:}
\begin{equation}\label{twodare}
	\begin{aligned}
		P_1 = AP_1A^T+Q-AP_1C^T\left(CP_1C^T+R_1\right)^{-1}CP_1A^T\;\\
		P_2=AP_2A^T+Q-AP_2C^T\left(CP_2C^T+R_2\right)^{-1}CP_2A^T.
	\end{aligned}
\end{equation}

\begin{lemma}\label{lmdifference}
	{Suppose $P_1$ and $P_2$ satisfy the DARE \eqref{twodare} respectively.} Then, one has
	$$
	P_1-P_2=\sum_{k=0}^{\infty}\tilde{A}_{P_1}^kA\bar{P}_1C^T\left(R_2^{-1}-R_1^{-1}\right)C\bar{P}_2A^T\big(\tilde{A}_{P_2}^T\big)^k,
	$$
	where 
	$$
	\begin{aligned}
	&\bar{P}_1=\big(P_1^{-1}+C^TR_1^{-1}C\big)^{-1},\quad \bar{P}_2=\big(P_2^{-1}+C^TR_2^{-1}C\big)^{-1}\\
	&\tilde{A}_{P_1}=A-K_{P_1}C,\qquad K_{P_1}=AP_1C^T\left(CP_1C^T+R_1\right)^{-1}\\
	&\tilde{A}_{P_2}=A-K_{P_2}C,\qquad K_{P_2}=AP_2C^T\left(CP_2C^T+R_2\right)^{-1}.
	\end{aligned}
	$$
\end{lemma}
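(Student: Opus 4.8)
The plan is to reduce the DARE difference to a discrete Stein (Lyapunov-type) equation and then unroll it into the claimed series. First I would use the matrix inversion lemma (Lemma \ref{lm2}) to rewrite each DARE in the compact form $P_i = A\bar{P}_iA^T + Q$, since $\bar{P}_i = P_i - P_iC^T(CP_iC^T+R_i)^{-1}CP_i$ is exactly the bracketed correction term. Subtracting the two compact equations immediately gives $P_1 - P_2 = A(\bar{P}_1 - \bar{P}_2)A^T$, so the whole problem reduces to expressing $\bar{P}_1 - \bar{P}_2$ in terms of $P_1 - P_2$ and the noise mismatch $R_2^{-1} - R_1^{-1}$.

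Next, writing $\bar{P}_i^{-1} = P_i^{-1} + C^TR_i^{-1}C$ and applying the resolvent identity $\bar{P}_1 - \bar{P}_2 = \bar{P}_1(\bar{P}_2^{-1} - \bar{P}_1^{-1})\bar{P}_2$, the inner difference splits into a state part $P_2^{-1} - P_1^{-1}$ and a noise part $C^T(R_2^{-1} - R_1^{-1})C$. I would rewrite the state part as $P_2^{-1} - P_1^{-1} = P_1^{-1}(P_1 - P_2)P_2^{-1}$, so that after distributing $\bar{P}_1$ and $\bar{P}_2$ the term $\bar{P}_1 P_1^{-1}(P_1-P_2)P_2^{-1}\bar{P}_2$ appears. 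Two bookkeeping identities then do the essential work: $\bar{P}_i P_i^{-1} = I - \bar{P}_iC^TR_i^{-1}C$, and the gain identity $P_iC^T(CP_iC^T+R_i)^{-1} = \bar{P}_iC^TR_i^{-1}$ (itself a one-line consequence of Lemma \ref{lm2}). Together these yield $A(I - \bar{P}_1C^TR_1^{-1}C) = \tilde{A}_{P_1}$ and its transpose counterpart $(I - C^TR_2^{-1}C\bar{P}_2)A^T = \tilde{A}_{P_2}^T$, which are precisely the closed-loop matrices in the statement.

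Multiplying $\bar{P}_1 - \bar{P}_2$ by $A$ on the left and $A^T$ on the right and substituting these identities collapses everything into the single Stein equation $X = \tilde{A}_{P_1}X\tilde{A}_{P_2}^T + W$, where $X = P_1 - P_2$ and $W = A\bar{P}_1C^T(R_2^{-1} - R_1^{-1})C\bar{P}_2A^T$. I would then unroll this recursion to obtain $X = \sum_{k=0}^{\infty}\tilde{A}_{P_1}^k W(\tilde{A}_{P_2}^T)^k$, which is exactly the claimed formula. Convergence and uniqueness follow from Lemma \ref{lmstable}: both $\tilde{A}_{P_1}$ and $\tilde{A}_{P_2}$ are Schur stable, so the Stein operator $X \mapsto \tilde{A}_{P_1}X\tilde{A}_{P_2}^T$ has spectral radius $\rho(\tilde{A}_{P_1})\rho(\tilde{A}_{P_2}) < 1$, the Neumann series decays geometrically, and $P_1 - P_2$, satisfying the equation, must coincide with it.

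I expect the main obstacle to be the second paragraph: correctly tracking the left/right placement of the factors $\bar{P}_i$, $P_i^{-1}$ and $C^TR_i^{-1}C$ so that the state part cleanly becomes $\tilde{A}_{P_1}(\,\cdot\,)\tilde{A}_{P_2}^T$ rather than some asymmetric hybrid. The algebra is routine once the gain identity $P_iC^T(CP_iC^T+R_i)^{-1} = \bar{P}_iC^TR_i^{-1}$ is in hand, but mismatching which solution ($P_1$ versus $P_2$) sits on the left and which on the right is the easy place to go wrong, and it is exactly what pins down that the left factor is $\tilde{A}_{P_1}$ and the right factor is $\tilde{A}_{P_2}^T$.
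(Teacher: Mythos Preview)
Your proposal is correct and arrives at exactly the same Stein equation
\[
P_1-P_2=\tilde{A}_{P_1}(P_1-P_2)\tilde{A}_{P_2}^T+A\bar{P}_1C^T(R_2^{-1}-R_1^{-1})C\bar{P}_2A^T,
\]
which is then unrolled and closed off by Schur stability, just as the paper does. The only genuine difference is how you reach this equation. The paper subtracts the two DAREs in the expanded gain form, then manipulates the cross terms $AP_1C^TK_{P_2}^T-K_{P_1}CP_2A^T$ directly to factor out $(A-K_{P_1}C)(\cdot)(A-K_{P_2}C)^T$, obtaining the forcing term first as $K_{P_1}(R_1-R_2)K_{P_2}^T$ and only at the very end converting it to the $\bar{P}_i$ form via the Kalman identity. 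You instead pass immediately to the compact form $P_i=A\bar{P}_iA^T+Q$, apply the resolvent identity to $\bar{P}_1-\bar{P}_2$, and recognise $A\bar{P}_iP_i^{-1}=\tilde{A}_{P_i}$ to read off the closed-loop factors. Your route is a bit more systematic and makes the left/right placement of $\tilde{A}_{P_1}$ versus $\tilde{A}_{P_2}^T$ fall out automatically from which side of the resolvent identity each $\bar{P}_i$ sits on; the paper's route is shorter on the page but requires spotting the right add-and-subtract for the cross terms. Neither argument is more general than the other, and both rely on the same two ingredients: the gain identity $P_iC^T(CP_iC^T+R_i)^{-1}=\bar{P}_iC^TR_i^{-1}$ and the Schur stability of $\tilde{A}_{P_i}$ from Lemma~\ref{lmstable}.
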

\vspace{6pt}
	The proof is given in Appendix \ref{Pflmdifference}.

With Lemma \ref{lmdifference}, {one can estimate} the difference of two solutions to DARE with different $R$ by a sum of infinite series:

$$
\begin{aligned}
\big\|P_1-P_2\big\|_2\leq&\left\|\bar{P}_1\right\|_2\left\|\bar{P}_2\right\|_2\big\|C\big\|_2^2\big\|R_2^{-1}-R_1^{-1}\big\|_2\\ &\times\big\|A\big\|_2^2\sum_{k=0}^{\infty}\big\|\tilde{A}_{P_1}^k\big\|_2\big\|\tilde{A}_{P_2}^k\big\|_2.
\end{aligned}
$$

	Let $R_1=R$, and $R_2=\tilde{R}_{i}^{(L)}\left(L\ge d\right)$. Then, an estimation of the performance degradation compared with {the centralized setting is given by}
    $$
	\begin{aligned}
		\big\|P-P_i^{(L)}\big\|_2\leq&\left\|\bar{P}\right\|_2\big\|\bar{P}_i^{(L)}\big\|_2\big\|C\big\|_2^2\;\big\|\big(\tilde{R}_i^{(L)}\big)^{-1}-R^{-1}\big\|_2\\
		&\times\big\|A\big\|_2^2\sum_{k=0}^{\infty}\big\|\tilde{A}_{P}^k\big\|_2\big\|\tilde{A}_{P_i^{(L)}}^k\big\|_2.
	\end{aligned}
	$$

	{Next, it remains} to analyze the property of the infinite sum $\sum_{k=0}^{\infty}\big\|\tilde{A}_{P}^k\big\|_2\big\|\tilde{A}_{P_i^{(L)}}^k\big\|_2$ to obtain the relationship between $\big\|P-P_i^{(L)}\big\|_2$ and $\big\|\big(\tilde{R}_i^{(L)}\big)^{-1}-R^{-1}\big\|_2$, where Lemma 3 and Lemma \ref{lmstable} will be utilized.

	As the matrices $\tilde{A}_{P}$ and $\tilde{A}_{P_i^{(L)}}$ are Schur stable, one can find a bound of the infinite sum $\sum_{k=0}^{\infty}\big\|\tilde{A}_{P}^k\big\|_2\big\|\tilde{A}_{P_i^{(L)}}^k\big\|_2$ and obtain an approximate result as 
	$$
	\big\|P-P_i^{(L)}\big\|_2\leq M\big\|\big(\tilde{R}_i^{(L)}\big)^{-1}-R^{-1}\big\|_2,
	$$
	where $M$ is a constant. However, for different $\tilde{R}_i^{(L)}$, {the constant $M$} may also be different. {This issue makes the approximately linear relationship between $\big\|P-P_i^{(L)}\big\|_2$ and $\big\|\big(\tilde{R}_i^{(L)}\big)^{-1}-R^{-1}\big\|_2$ {not trivial to obtain, meanwhile it brings about} difficulties to the analysis of the uniform property of $\big\|P-P_i^{(L)}\big\|_2$ for all $i\in\mathcal{V}$ and $L\ge d$.} {Therefore, it needs to} discuss the uniform property of the infinite sum $\sum_{k=0}^{\infty}\big\|\tilde{A}_{P}^k\big\|_2\big\|\tilde{A}_{P_i^{(L)}}^k\big\|_2$ for a group of $\tilde{R}_i^{(L)}$ to {reveal} the trade-off between the number of fusion step $L$ and the estimation performance.
	\vspace{6pt}
{\begin{theorem}\label{thmric}
	For any given $\left(A,C\right)$ and $Q,R$, {there exist two constants} $M_1>0$, $0<q_1<1$, such that
	$$
	\big\|P-P_i^{\left(L\right)}\big\|_2\leq M_1q_1^L, \quad\forall i\in\mathcal{V},\forall L\ge d.
	$$
\end{theorem}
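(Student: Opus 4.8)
The plan is to start from the infinite–series bound already obtained from Lemma \ref{lmdifference} with $R_1=R$ and $R_2=\tilde{R}_i^{(L)}$ ($L\ge d$), namely
\[
\big\|P-P_i^{(L)}\big\|_2\leq \|\bar{P}\|_2\,\big\|\bar{P}_i^{(L)}\big\|_2\,\|C\|_2^2\,\|A\|_2^2\,\big\|(\tilde{R}_i^{(L)})^{-1}-R^{-1}\big\|_2\sum_{k=0}^{\infty}\big\|\tilde{A}_P^k\big\|_2\,\big\|\tilde{A}_{P_i^{(L)}}^k\big\|_2 ,
\]
and to show that every factor except the mismatch term $\|(\tilde{R}_i^{(L)})^{-1}-R^{-1}\|_2$ is bounded by a single constant \emph{uniform} in both $i\in\mathcal{V}$ and $L\ge d$, while the mismatch term itself decays geometrically in $L$. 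I would first dispatch the mismatch term: since $(\tilde{R}_i^{(L)})^{-1}-R^{-1}$ is block diagonal with blocks $(Nl_{ij}^{(L)}-1)R_j^{-1}$, its $2$-norm equals $\max_j|Nl_{ij}^{(L)}-1|\,\|R_j^{-1}\|_2$. By Lemma \ref{lm1}, $\mathcal{L}^L\to\frac1N\mathbf{1}\mathbf{1}^T$ exponentially, so $|l_{ij}^{(L)}-\frac1N|\le c_0\rho_2^{\,L}$ with $\rho_2<1$ the modulus of the second largest eigenvalue of $\mathcal{L}$; hence $\|(\tilde{R}_i^{(L)})^{-1}-R^{-1}\|_2\le C_1\rho_2^{\,L}$ uniformly in $i$.

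The core of the argument is the uniform boundedness of the remaining factors, and in particular of the infinite sum. The first step is a uniform bound on $P_i^{(L)}$. Because $l_{ij}^{(L)}>0$ for $L\ge d$ and $l_{ij}^{(L)}\to\frac1N>0$, the finitely many small-$L$ terms together with the convergent tail yield a uniform lower bound $l_{ij}^{(L)}\ge c_{\min}>0$ over all $i,j$ and all $L\ge d$; combined with $l_{ij}^{(L)}\le 1$ this sandwiches $\frac1N R\le \tilde{R}_i^{(L)}\le\frac{1}{Nc_{\min}}R$. Monotonicity of the DARE solution in $R$ (Lemma \ref{lmmono}) then gives $P_i^{(L)}\le dare\big(A,C,Q,\frac{1}{Nc_{\min}}R\big)$, so there is a uniform bound $\lambda_{max}(P_i^{(L)})\le\bar p$ for all $i,L$, and the same sandwich bounds $\bar{P}_i^{(L)}\le P_i^{(L)}$ uniformly. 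Feeding $\bar p$ into Lemma \ref{lmstable} produces uniform estimates $\rho(\tilde{A}_{P_i^{(L)}})\le\bar\rho:=\sqrt{1-\lambda_{min}(Q)/\bar p}<1$ and $\|\tilde{A}_{P_i^{(L)}}\|_2\le\bar a:=\sqrt{\bar p/\lambda_{min}(Q)}$ that no longer depend on $i$ or $L$.

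With these uniform spectral-radius and norm bounds in hand, Lemma \ref{lm3} gives $\|\tilde{A}_{P_i^{(L)}}^k\|_2\le g(k)$, where $g(k)=\sqrt{n}\sum_{j=0}^{n-1}\binom{n-1}{j}\binom{k}{j}\bar a^{\,j}\bar\rho^{\,k-j}$ is a fixed polynomial-in-$k$ multiple of $\bar\rho^{\,k}$, independent of $i,L$; likewise $\|\tilde{A}_P^k\|_2\le h(k)$ is a fixed polynomial-in-$k$ multiple of $\rho(\tilde{A}_P)^k$. Since $\bar\rho\,\rho(\tilde{A}_P)<1$, the series $\sum_{k}g(k)h(k)$ converges to a finite constant that is uniform in $i$ and $L$. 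Collecting this constant together with the uniform bounds on $\|\bar{P}\|_2$, $\|\bar{P}_i^{(L)}\|_2$, $\|C\|_2^2$ and $\|A\|_2^2$ into a single $M_2$, and combining with the geometric mismatch bound, yields $\|P-P_i^{(L)}\|_2\le M_2C_1\rho_2^{\,L}$, which is the claimed estimate with $M_1=M_2C_1$ and $q_1=\rho_2$.

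I would expect the main obstacle to be precisely this uniformity over the infinite family $\{L\ge d\}$: a naive use of Schur stability of each $\tilde{A}_{P_i^{(L)}}$ only guarantees a convergent sum for each fixed $(i,L)$, with a bounding constant that could in principle blow up as $L$ varies. The two decisive points that defeat this are (i) the uniform lower bound $c_{\min}$ on the entries of $\mathcal{L}^L$ for $L\ge d$, which keeps $\tilde{R}_i^{(L)}$ and hence $P_i^{(L)}$ uniformly bounded, and (ii) Lemma \ref{lm3}, which converts the uniform $\rho$- and norm-bounds of Lemma \ref{lmstable} into one summable majorant $g(k)$ valid for the whole family. Everything else is routine once these uniform constants are established.
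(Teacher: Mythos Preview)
Your proposal is correct and follows essentially the same route as the paper: bound the mismatch $\|(\tilde{R}_i^{(L)})^{-1}-R^{-1}\|_2$ geometrically via Lemma~\ref{lm1}, then obtain a uniform bound on $\sum_k\|\tilde{A}_P^k\|_2\|\tilde{A}_{P_i^{(L)}}^k\|_2$ by combining Lemma~\ref{lmmono} (to bound $P_i^{(L)}$ uniformly), Lemma~\ref{lmstable} (to bound $\rho(\tilde{A}_{P_i^{(L)}})$ and $\|\tilde{A}_{P_i^{(L)}}\|_2$ uniformly), and Lemma~\ref{lm3} (to convert these into a summable majorant). The only cosmetic difference is that the paper simply asserts the existence of a single $\mathcal{R}$ with $\tilde{R}_i^{(L)}\le\mathcal{R}$ for all $i,L\ge d$, whereas you construct it explicitly via the uniform lower bound $c_{\min}$ on the entries of $\mathcal{L}^L$; your identification of $q_1$ with the second-largest eigenvalue modulus of $\mathcal{L}$ is exactly the content of the paper's Corollary~2.
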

\vspace{6pt}
	The proof is given in Appendix \ref{Pfthmric}.}

\vspace{6pt}
\begin{remark}
	The main difficulty in proving Theorem \ref{thmric} is to formulate a uniform upper bound in terms of $\sum_{k=0}^{\infty}\big\|\tilde{A}_{P}^k\big\|_2\big\|\tilde{A}_{P_i^{(L)}}^k\big\|_2$ for all $L\ge d$ and $i\in\mathcal{V}$. {As one can use an exponentially convergent bound for} $\big\|P_{i}^{(L)}-P\big\|_2$, the convergence speed of $\big\|P_{i}^{(L)}-P\big\|_2$ with increasing $L$ is not slower than exponential convergence.
	Theorem \ref{thmric} also illustrates that, with the number of fusion step tending to infinity, the steady-state performance matrix $P_{i}^{\left(L\right)}$ will exponentially converge to the centralized optimal steady-state performance, which reflects the trade-off between estimation performance and communication cost among sensor nodes {in a distributed setting.}
\end{remark}
\vspace{6pt}
\begin{remark}
	It is worth mentioning that the results proposed in \cite{olfati2007distributed,Kamagarpour4738989,kamal2013information,BATTILOTTI2021109589} only discussed the asymptotic optimal property of the corresponding algorithm, i.e., the case when $L\to\infty$. However, {the case with finite iteration step $L$ was rarely discussed} in the literature. Meanwhile, the references \cite{battistelli2014kullback,he2018consistent,talebi2019distributed,talebi2020quaternion} discussed the performance of distributed information fusion algorithms with finite fusion steps but did not compare the filtering performance with the centralized optimal setting. Thus, Theorem \ref{thmric} {not only characterizes} the property of the filtering algorithm with insufficient information fusion but also {compares the performance degradation with the centralized setting}, which is a more in-depth {result} in the area of distributed filtering. 
\end{remark}
\vspace{6pt}

It is apparent that the matrix $P_i^{(L)}$ is not the real error covariance matrix, i.e., 
$$
P_i^{(L)}\neq\lim_{k\to\infty}\mathbb{E}\big\lbrace\tilde{x}_{i,k|k-1}\tilde{x}_{i,k|k-1}^T\big\rbrace.
$$ 
With the aforementioned analysis, the iteration of the real covariance $\mathbb{E}\big\lbrace\tilde{x}_{i,k|k-1}\tilde{x}_{i,k|k-1}^T\big\rbrace$ will finally converge to a solution to DLE. Thus, with the technique developed above, {one can further} discuss the real performance degradation of $\mathbb{E}\big\lbrace\tilde{x}_{i,k|k-1}\tilde{x}_{i,k|k-1}^T\big\rbrace$ due to insufficient fusion $L$.

{First, the aim is to} analyze the distance between the matrix $P_{i}^{(L)}$ and the matrix $\tilde{P}_i^{(L)}$. 
\vspace{6pt}
{\begin{theorem}\label{thmcov}
	For any given $\left(A,C\right)$ and $Q,R$, there exist two constants $M_2>0$, $0<q_2<1$, such that
	$$
	\big\|\tilde{P}_i^{(L)}-P_i^{\left(L\right)}\big\|_2\leq M_2q_2^L, \quad\forall i\in\mathcal{V},\forall L\ge d.
	$$
\end{theorem}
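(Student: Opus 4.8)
The plan is to exploit the fact that $P_i^{(L)}$ and $\tilde P_i^{(L)}$ solve two Lyapunov equations sharing the \emph{same} closed-loop matrix $\tilde A_{P_i^{(L)}}=A-K_{P_i^{(L)}}\tilde C_i^{(L)}$ and the \emph{same} gain $K_{P_i^{(L)}}$, differing only in their driving-noise term. First I would recast the DARE \eqref{ricori} into Lyapunov form: expanding $\tilde A_{P_i^{(L)}}P_i^{(L)}\tilde A_{P_i^{(L)}}^{T}+K_{P_i^{(L)}}\tilde R_i^{(L)}K_{P_i^{(L)}}^{T}$ and using the defining identity $K_{P_i^{(L)}}\big(\tilde C_i^{(L)}P_i^{(L)}(\tilde C_i^{(L)})^{T}+\tilde R_i^{(L)}\big)=AP_i^{(L)}(\tilde C_i^{(L)})^{T}$ to cancel the cross terms yields
$$
P_i^{(L)}=\tilde A_{P_i^{(L)}}P_i^{(L)}\tilde A_{P_i^{(L)}}^{T}+Q+K_{P_i^{(L)}}\tilde R_i^{(L)}K_{P_i^{(L)}}^{T}.
$$
Since $L\ge d$ forces $l_{ij}^{(L)}>0$ for every $j$, one has $\bar R_i^{(L)}=R$, so subtracting the line above from the DLE \eqref{lyaeq} makes the quadratic terms in the common matrix $\tilde A_{P_i^{(L)}}$ cancel, and $\Delta_i\triangleq\tilde P_i^{(L)}-P_i^{(L)}$ obeys
$$
\Delta_i=\tilde A_{P_i^{(L)}}\Delta_i\tilde A_{P_i^{(L)}}^{T}+K_{P_i^{(L)}}\big(R-\tilde R_i^{(L)}\big)K_{P_i^{(L)}}^{T}.
$$

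Because $\tilde A_{P_i^{(L)}}$ is Schur stable (Lemma \ref{lmstable} applied to the DARE \eqref{ricori}), this Lyapunov equation has the explicit solution $\Delta_i=\sum_{k=0}^{\infty}\tilde A_{P_i^{(L)}}^{k}K_{P_i^{(L)}}\big(R-\tilde R_i^{(L)}\big)K_{P_i^{(L)}}^{T}\big(\tilde A_{P_i^{(L)}}^{T}\big)^{k}$, so that
$$
\big\|\Delta_i\big\|_2\le\big\|K_{P_i^{(L)}}\big\|_2^{2}\,\big\|R-\tilde R_i^{(L)}\big\|_2\sum_{k=0}^{\infty}\big\|\tilde A_{P_i^{(L)}}^{k}\big\|_2^{2}.
$$
The exponential decay originates entirely in the forcing factor: since $\tilde R_i^{(L)}=diag\big(\tfrac{1}{Nl_{ij}^{(L)}}R_j\big)$ and $R=diag(R_j)$, the $j$-th block of $R-\tilde R_i^{(L)}$ equals $\big(1-\tfrac{1}{Nl_{ij}^{(L)}}\big)R_j$; Lemma \ref{lm1} gives $\big|l_{ij}^{(L)}-\tfrac1N\big|\le c\,q^{L}$ for some $0<q<1$, and since $l_{ij}^{(L)}$ stays bounded away from $0$ for $L\ge d$, this produces $\big\|R-\tilde R_i^{(L)}\big\|_2\le c'q^{L}$.

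The main obstacle is to bound the two remaining factors $\big\|K_{P_i^{(L)}}\big\|_2$ and $\sum_{k\ge0}\big\|\tilde A_{P_i^{(L)}}^{k}\big\|_2^{2}$ \emph{uniformly} over the whole family indexed by $i\in\mathcal V$ and $L\ge d$ --- precisely the uniform-boundedness difficulty already resolved for Theorem \ref{thmric}. Theorem \ref{thmric} gives $\|P_i^{(L)}-P\|_2\le M_1q_1^{L}$, so $\lambda_{max}(P_i^{(L)})$ admits a bound independent of $i$ and $L$; inserting this into Lemma \ref{lmstable} furnishes a uniform spectral-radius bound $\rho(\tilde A_{P_i^{(L)}})\le\bar\rho<1$ together with a uniform bound on $\|\tilde A_{P_i^{(L)}}\|_2$, which Lemma \ref{lm3} converts into a uniformly convergent bound for $\sum_{k\ge0}\|\tilde A_{P_i^{(L)}}^{k}\|_2^{2}$. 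The uniform bound on $\|K_{P_i^{(L)}}\|_2$ follows likewise from the uniform boundedness of $P_i^{(L)}$ and of $(\tilde R_i^{(L)})^{-1}$. Collecting these uniform constants into a single $M_2$ and taking $q_2=q$ then yields $\big\|\tilde P_i^{(L)}-P_i^{(L)}\big\|_2\le M_2q_2^{L}$ for all $i\in\mathcal V$ and $L\ge d$, as claimed.
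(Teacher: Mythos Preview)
Your argument is correct and follows essentially the same approach as the paper: subtract the two Lyapunov equations sharing $\tilde A_{P_i^{(L)}}$ and $K_{P_i^{(L)}}$, write $\Delta_i$ as the infinite series $\sum_{k\ge0}\tilde A_{P_i^{(L)}}^{k}K_{P_i^{(L)}}(R-\tilde R_i^{(L)})K_{P_i^{(L)}}^{T}(\tilde A_{P_i^{(L)}}^{T})^{k}$, bound the series uniformly via Lemmas \ref{lm3} and \ref{lmstable}, and extract the exponential decay from Lemma \ref{lm1}. The only cosmetic differences are that the paper rewrites $K_{P_i^{(L)}}=A\bar P_i^{(L)}C^{T}(\tilde R_i^{(L)})^{-1}$ so as to bound $\big\|(\tilde R_i^{(L)})^{-1}(R-\tilde R_i^{(L)})(\tilde R_i^{(L)})^{-1}\big\|_2$ (thereby avoiding any lower bound on $l_{ij}^{(L)}$), and obtains the uniform bound on $P_i^{(L)}$ directly from the $\mathcal R,\mathcal P$ construction in the proof of Theorem \ref{thmric} rather than by invoking Theorem \ref{thmric} itself.
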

\vspace{6pt}
	The proof is given in Appendix \ref{Pfthmcov}.
}
\vspace{6pt}

The following Corollaries finally illustrate the performance degradation of $\tilde{P}_{i}^{(L)}$ compared with the centralized steady-state performance $P$ and the relation between the convergence rate $q$ and the spectral property of the weight matrix $\mathcal{L}$.
\vspace{6pt}
\begin{corollary}\label{exponential}
	For any given $\left(A,C\right)$ and $Q,R$, there exist two constants $M_3>0$, $0<q_3<1$, such that
	$$
	\big\|\tilde{P}_i^{(L)}-P\big\|_2\leq M_3q_3^L, \quad\forall i\in\mathcal{V},\forall L\ge d.
	$$
\end{corollary}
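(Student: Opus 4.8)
The plan is to obtain this corollary as an immediate consequence of the two preceding theorems, Theorem \ref{thmric} and Theorem \ref{thmcov}, via a single application of the triangle inequality. First I would insert the intermediate quantity $P_i^{(L)}$, the solution to the DARE \eqref{ricori}, and split the target norm as
$$
\big\|\tilde{P}_i^{(L)}-P\big\|_2\leq\big\|\tilde{P}_i^{(L)}-P_i^{(L)}\big\|_2+\big\|P_i^{(L)}-P\big\|_2.
$$
The first term is controlled by Theorem \ref{thmcov}, which gives $\big\|\tilde{P}_i^{(L)}-P_i^{(L)}\big\|_2\leq M_2q_2^L$, and the second term is controlled by Theorem \ref{thmric}, which gives $\big\|P_i^{(L)}-P\big\|_2\leq M_1q_1^L$, both holding uniformly for all $i\in\mathcal{V}$ and all $L\ge d$.

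The remaining step is to merge the two exponential bounds into a single one. Since $0<q_1<1$ and $0<q_2<1$, I would set $q_3=\max\left\{q_1,q_2\right\}$, so that $0<q_3<1$ and both $q_1^L\leq q_3^L$ and $q_2^L\leq q_3^L$ hold for every $L\ge 0$. Then
$$
\big\|\tilde{P}_i^{(L)}-P\big\|_2\leq M_2q_2^L+M_1q_1^L\leq\left(M_1+M_2\right)q_3^L,
$$
so taking $M_3=M_1+M_2>0$ yields the claim with the stated uniformity over $i$ and $L$ inherited directly from the two theorems.

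There is essentially no hard part in this corollary itself: all the analytical effort, namely the infinite-series representation of the DARE difference from Lemma \ref{lmdifference}, the Schur-stability bounds of Lemma \ref{lmstable}, and the uniform control of the feedback powers $\sum_{k}\big\|\tilde{A}_P^k\big\|_2\big\|\tilde{A}_{P_i^{(L)}}^k\big\|_2$ over the whole family $\left\{\tilde{R}_i^{(L)}\right\}$, has already been expended in establishing Theorems \ref{thmric} and \ref{thmcov}. The only subtlety worth flagging is that the two theorems may produce different decay bases $q_1$ and $q_2$; the use of the maximum cleanly resolves this without degrading the exponential rate, and the additivity of the prefactors is harmless. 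Thus the corollary records the end-to-end conclusion that the true steady-state error covariance of each sensor converges to the centralized optimal matrix $P$ at an exponential rate in the fusion step $L$.
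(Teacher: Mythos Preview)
Your proposal is correct and follows exactly the paper's own proof: the paper also invokes the triangle inequality with $P_i^{(L)}$ as the intermediate point and appeals to Theorems \ref{thmric} and \ref{thmcov}. The explicit choice $q_3=\max\{q_1,q_2\}$, $M_3=M_1+M_2$ that you spell out is left implicit in the paper, but the argument is the same.
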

\vspace{6pt}
\begin{proof}
	The corollary can be proved with Theorem \ref{thmric}, Theorem \ref{thmcov} and the triangular inequality of 2-norm:
	$$
	\big\|\tilde{P}_i^{(L)}-P\big\|_2\leq \big\|\tilde{P}_i^{(L)}-P_{i}^{(L)}\big\|_2+\big\|P_{i}^{(L)}-P\big\|_2.
	$$
\end{proof}
\textcolor{black}{
\begin{corollary}
	The parameters $q_1,q_2,q_3$ mentioned in Theorem \ref{thmric}, Theorem \ref{thmcov} and Corollary \ref{exponential}  are not larger than the norm of the second largest eigenvalue of the stochastic matrix $\mathcal{L}$.
\end{corollary}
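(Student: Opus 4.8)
The plan is to trace all three rates $q_1,q_2,q_3$ back to their common origin, namely the speed at which the $L$-fold fusion matrix $\mathcal{L}^L$ approaches the averaging operator $\frac{1}{N}\mathbf{1}\mathbf{1}^T$. By Lemma \ref{lm1} this speed is governed by $|\lambda_2|$, the norm of the second largest eigenvalue of $\mathcal{L}$, so it suffices to show that every gap bounded in Theorem \ref{thmric}, Theorem \ref{thmcov} and Corollary \ref{exponential} inherits precisely this decay. The bridge between them is the mismatched noise covariance $\tilde{R}_i^{(L)}$: for $L\ge d$ all $l_{ij}^{(L)}>0$, and as the entries $l_{ij}^{(L)}$ approach $\tfrac{1}{N}$ the matrix $\tilde{R}_i^{(L)}$ approaches the true $R$, which is the only $L$-dependent ingredient driving each performance gap.

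First I would quantify this mismatch. Lemma \ref{lm1} supplies a constant $c>0$ with $\big|l_{ij}^{(L)}-\tfrac{1}{N}\big|\le c\,|\lambda_2|^L$ for all $i,j$ and all $L$. Since $\big(\tilde{R}_i^{(L)}\big)^{-1}-R^{-1}$ is block-diagonal with $j$-th block $\big(Nl_{ij}^{(L)}-1\big)R_j^{-1}$, taking the $2$-norm blockwise gives
$$
\big\|\big(\tilde{R}_i^{(L)}\big)^{-1}-R^{-1}\big\|_2\le N\Big(\max_{j}\big\|R_j^{-1}\big\|_2\Big)\max_{j}\big|l_{ij}^{(L)}-\tfrac{1}{N}\big|\le C'\,|\lambda_2|^L,
$$
with $C'$ independent of both $i$ and $L\ge d$. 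This is the single $L$-dependent factor appearing in the estimate that precedes Theorem \ref{thmric}.

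Next I would combine this estimate with the uniform bound already obtained inside the proof of Theorem \ref{thmric}. That proof controls $\big\|P-P_i^{(L)}\big\|_2$ by a constant $M$ — uniform in $i$ and $L\ge d$, since it bounds $\sum_{k=0}^{\infty}\big\|\tilde{A}_P^k\big\|_2\big\|\tilde{A}_{P_i^{(L)}}^k\big\|_2$ uniformly via Lemma \ref{lm3} and Lemma \ref{lmstable} — multiplied by $\big\|\big(\tilde{R}_i^{(L)}\big)^{-1}-R^{-1}\big\|_2$. Substituting the displayed estimate yields $\big\|P-P_i^{(L)}\big\|_2\le MC'\,|\lambda_2|^L$, so $q_1$ may be taken equal to $|\lambda_2|$. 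The proof of Theorem \ref{thmcov} reduces the gap $\big\|\tilde{P}_i^{(L)}-P_i^{(L)}\big\|_2$ to the same covariance mismatch (for $L\ge d$ the DLE \eqref{lyaeq} uses $\bar{R}_i^{(L)}=R$, so only the gain computed from the mismatched $\tilde{R}_i^{(L)}$ remains), whence the identical substitution gives $q_2\le|\lambda_2|$. Finally, the triangle inequality used in Corollary \ref{exponential} gives $\big\|\tilde{P}_i^{(L)}-P\big\|_2\le(M_1+M_2)|\lambda_2|^L$ once $q_1=q_2=|\lambda_2|$, so $q_3\le|\lambda_2|$ as well.

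The main obstacle will be ensuring that the constant multiplying $\big\|\big(\tilde{R}_i^{(L)}\big)^{-1}-R^{-1}\big\|_2$ carries no hidden $L$-dependence of its own, which would otherwise degrade the overall rate. This is exactly the uniformity already secured in Theorem \ref{thmric} and reused in Theorem \ref{thmcov}, where the infinite operator-norm sums are bounded uniformly over $L\ge d$ and $i\in\mathcal{V}$. Granting that uniformity, the only surviving source of $L$-dependence is the geometric factor $|\lambda_2|^L$ from Lemma \ref{lm1}, which is precisely the asserted bound on $q_1,q_2,q_3$.
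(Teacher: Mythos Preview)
Your argument follows the same route as the paper's: both reduce $q_1,q_2,q_3$ to the decay rate of $Nl_{ij}^{(L)}-1$, which Lemma~\ref{lm1} ties to $|\lambda_2|$, and then invoke the uniform-in-$L$ bounds already secured in the proofs of Theorems~\ref{thmric} and~\ref{thmcov}. The one subtlety the paper flags and you omit is the non-diagonalizable case: if $\lambda_2$ has a nontrivial Jordan block, Lemma~\ref{lm1} only gives a bound of order $L^{m-1}|\lambda_2|^L$, so the $q_i$ can be chosen arbitrarily close to, but not exactly equal to, $|\lambda_2|$; your displayed estimate $\big|l_{ij}^{(L)}-\tfrac{1}{N}\big|\le c\,|\lambda_2|^L$ tacitly assumes diagonalizability of $\mathcal{L}$.
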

\begin{proof}
	From the proof of Theorem \ref{thmric} and Theorem \ref{thmcov}, the decay rates $q_1,q_2,q_3$ are actually bounded by that of $Nl_{ij}^{(L)}-1$. Thus, the result follows. Note that even if the second largest eigenvalue of $\mathcal{L}$ has generalized eigenvector, the term $q_i,\;i=1,2,3$ can still be chosen as close to the norm of the second largest eigenvalue as possible.
\end{proof}}
\subsection{Discussion}
\textcolor{black}{
	In Theorem \ref{thmcov}, the deviation between $\tilde{P}_i^{(L)}$ and $P_i^{(L)}$ are reformulated in the series form (\ref{sumCov}), which indicates that the deviation is in essence determined by the difference of the noise terms $R$ and $\tilde{R}_i^{(L)}$. With the Kalman equality $K_{P_{i}^{(L)}}=A\bar{P}_i^{(L)}C^T\big(\tilde{R}_{i}^{(L)}\big)^{-1}$ and the uniform boundedness of $\sum_{k=0}^{\infty}\big\|\tilde{A}_{P_i^{(L)}}^{k}\big\|_2^2$, a more inherent result is that the decay rates of $\big\|P_i^{(L)}-\tilde{P}_i^{(L)}\big\|_2$ are uniformly bounded by that of $\tilde{l}_{ij}^{(L)}=\big(Nl_{ij}^{(L)}-1\big)^2+\big(Nl_{ij}^{(L)}-1\big)$, in which the linear term $\big(Nl_{ij}^{(L)}-1\big)$ dominates the convergence speed for sufficiently large $L$. Based on these properties,  the exponential convergence of $\big\|\tilde{P}_{i}^{(L)}-P_{i}^{(L)}\big\|_2$ is finally proved.}

    \textcolor{black}{Corollary 2 also reflects the superiority of the exponential convergence. An intuitive idea from Corollary 2 is that the parameter $l_{ij}$ could be designed to minimize the norm of the second largest eigenvalue of $\mathcal{L}$ or minimize the $L$-step consensus error to reach better performance. Some distributed parameter tuning techniques such as subgradient method \cite{Xiao200465} or graph filtering \cite{yi8678811} can also be utilized, which is an interesting direction for future exploration. } 
    
    \textcolor{black}{To summarize, the contents and proof of Theorem \ref{thmcov} are the basis for the derivation of Corollaries 1 and 2, and Corollary 2 reveals the connection between the convergence rate $q$ and the spectral property of $\mathcal{L}$. }

\section{Simulation}\label{sec4}
In this section, a target tracking numerical experiment is provided to illustrate the effectiveness of the proposed algorithm. The state transition matrix has the expression
$$
\begin{aligned}
	&a_{k}=\begin{pmatrix}
		1&T\\
		0&1
	\end{pmatrix}\qquad
	&A_k=\begin{pmatrix}
	a_{k}&0_{2\times 2}\\
	0_{2\times 2}&a_{k}
	\end{pmatrix}.
\end{aligned}
$$ 
The error covariance matrix $Q$ {takes the form of}
$$
\begin{aligned}
	&G=\begin{pmatrix}
		\frac{T^3}{3}&\frac{T^2}{2}\\
		\frac{T^2}{2}&T
	\end{pmatrix}\quad
	&Q=\begin{pmatrix}
		G&0.5G\\
		0.5G&G\\
		\end{pmatrix},
\end{aligned}
$$
where the sample interval is set to be $T=1$. There are three kinds of sensors with the observation matrix:
$$
\begin{aligned}
	&C^{(1)}=\left[1,0,0,0\right]\\
	&C^{(2)}=\left[0,0,1,0\right]\\
	&C^{(3)}=\left[0,0,0,0\right].
\end{aligned}
$$
The whole network consists of 20 sensor nodes, including 3 sensors of kind $C^{(1)}$, 3 sensors of kind $C^{(2)}$, 14 sensors of kind $C^{(3)}$, and $R_{i}=1,\forall i\in\mathcal{V}$. {The locations of the sensors are randomly set in a $300\times300$ {region} and each sensor is with a communication radius of 130.} The communication topology of sensor network is randomly generated in the numerical experiments, {as presented in Fig.\ref{topology}.}
\begin{figure}
	\centering
	\includegraphics[width=0.4\textwidth]{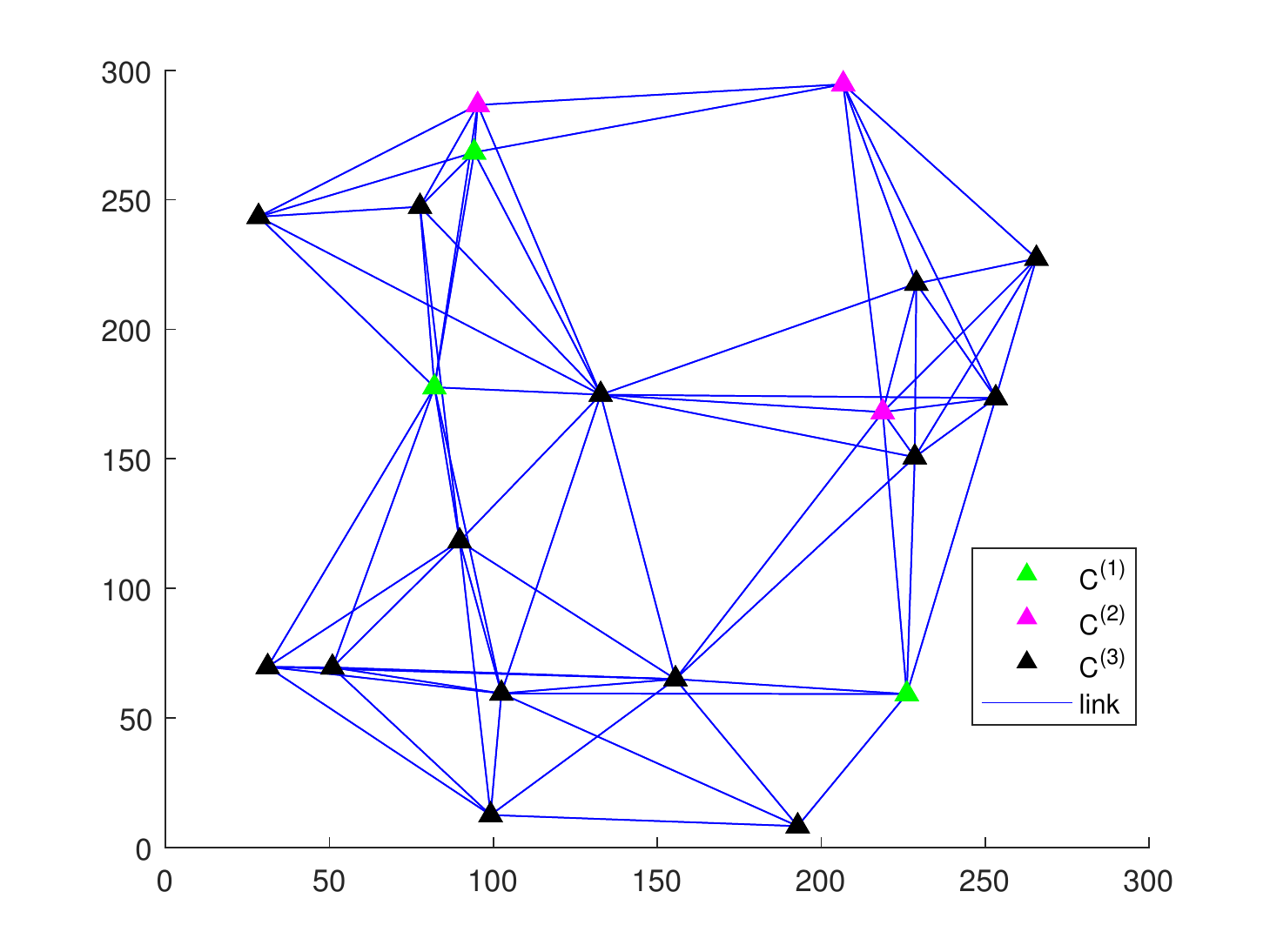}
	\caption{
	Illustration figure for communication topology of the sensor network.}
    \label{topology}	
\end{figure}

The expression of three kinds of observation matrices indicates that each kind of sensor is able to measure some partial states of the system but none of the pair $\left(A,C_i\right)$ is fully observable. Thus, the numerical experiments can fully illustrate the effectiveness of the CMDF algorithm, {as well as} the relationship between the filtering performance and fusion step $L$. Furthermore, the existence of $C^{(3)}$ implies that some sensor nodes in the network do not have the ability to observe the target system. This kind of node is {called a naive node}\cite{battistelli2018distributed}. 

In the experiment, the steady-state performance of the CMDF algorithm is {evaluated by the mean square error (MSE). Using} {Monte Carlo} method, the filtering process is run 200 steps for each simulation and 1000 times in total. With the result {obtained} in Section \ref{sec3}.1, in each simulation, the performance of the CMDF algorithm converges to the steady-state performance. Thus, the estimated value in the steady-state zone {is used} to calculate the mean square error, i.e.,
\begin{equation}\label{mymse}
	MSE_{i,k}=\frac{1}{1000}\sum_{l=1}^{1000}\big\|\hat{x}_{i,\infty}^{\left(l\right)}-x_{\infty}^{(l)}\big\|_{2}^{2},
\end{equation}
where $\hat{x}_{i,\infty}^{\left(l\right)}$ and $x_{\infty}^{(l)}$ denote the estimated state and real state in the $l$-th simulation, respectively, and $\infty$ means {sufficiently large} sampling instant $k$.

 Through changing the value of fusion step $L$, {one can} calculate the steady-state performance of each sensor with different values of fusion step $L$. The performance of CM-based distributed filtering algorithm with different fusion step $L$ is presented in Fig.\ref{partial} and Fig.\ref{total}, where Fig.\ref{partial} illustrates the steady-state performance of mean square error for partial sensors to fully illustrate the exponential convergence and Fig.\ref{total} illustrates the steady-state performance with different $L$ for the whole sensor network. Fig.\ref{partial} and Fig.\ref{total} verify the results of Theorem \ref{thmric} and Theorem \ref{thmcov}.
\begin{figure}
	\centering
	\includegraphics[width=0.4\textwidth]{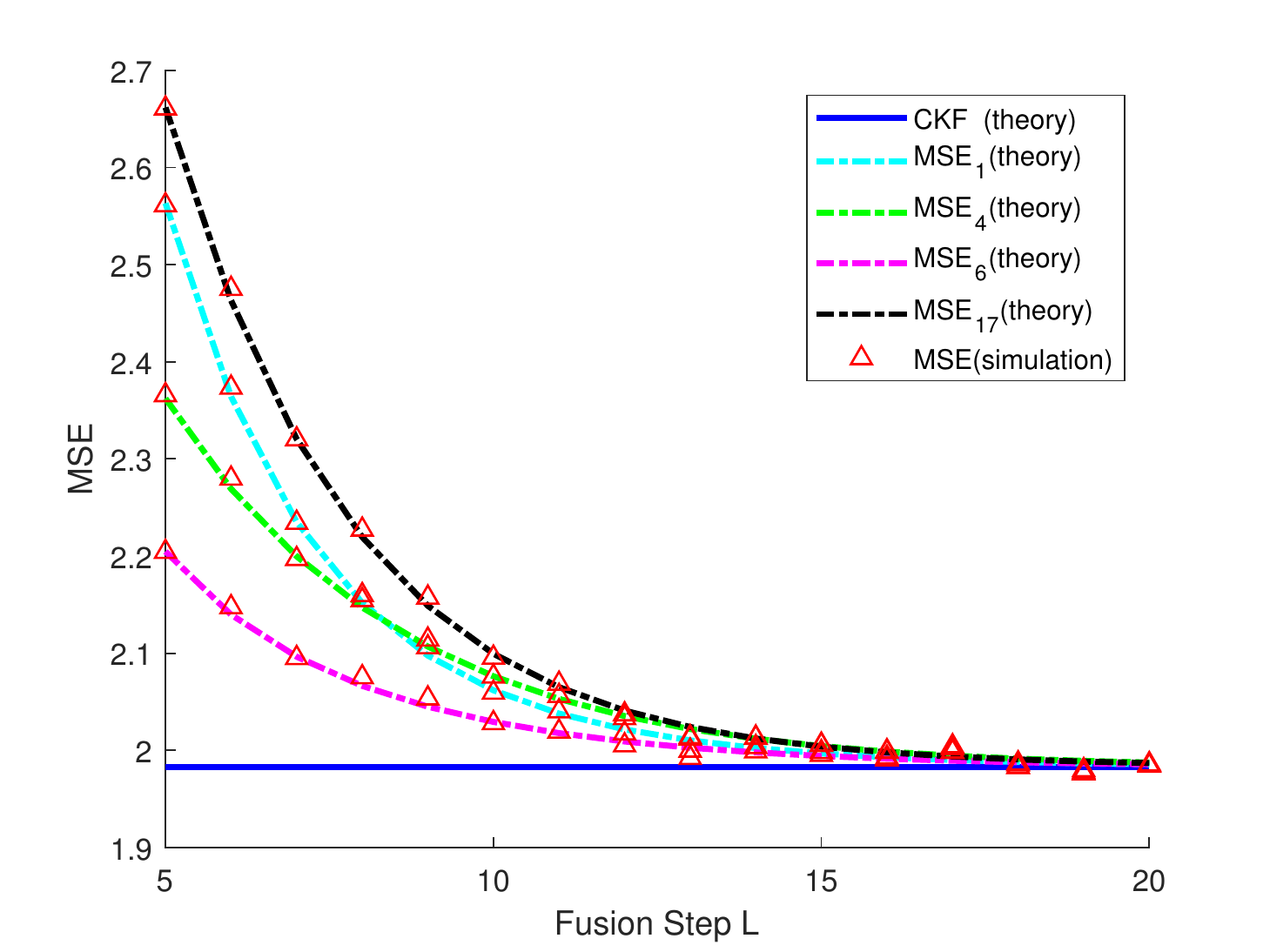}
	\caption{
	Illustration figure for convergence of filtering performance with the increasing of $L$, where CKF(theory) denotes the steady-state MSE of the centralized Kalman filtering algorithm, $\text{MSE}_i$(theory) denotes the steady-state MSE of the $i$-th sensor calculated through \eqref{lyaeq} and MSE(simulation) denotes the steady-state MSE calculated through \eqref{mymse}.}
    \label{partial}
\end{figure}

\begin{figure}
	\centering
	\includegraphics[width=0.4\textwidth]{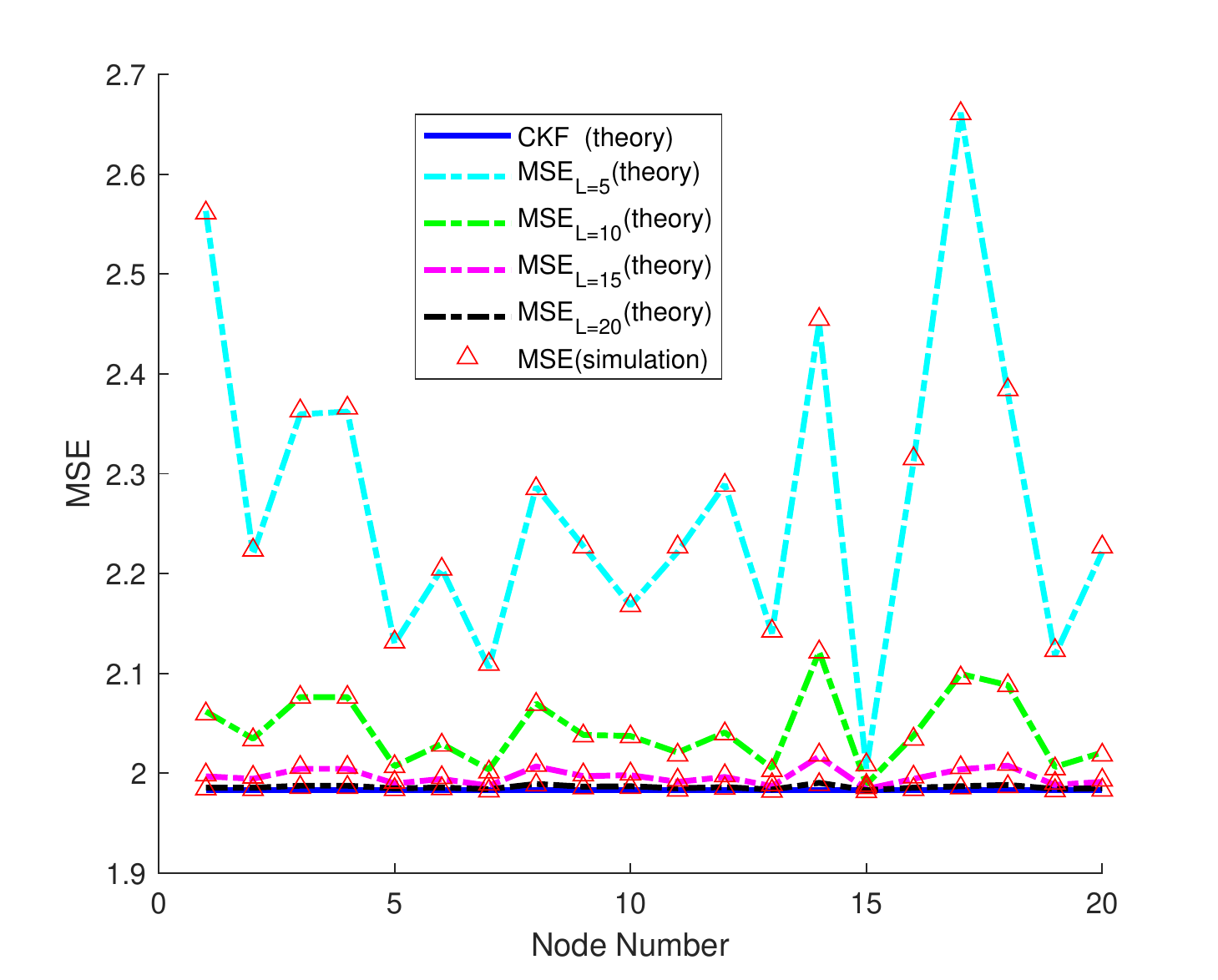}
	\caption{
		Illustration figure for convergence of filtering performance with the increasing of $L$, where $\text{MSE}_{L=k}$(theory) denotes MSE for all sensors with fusion step $L$ equal to $k$.}
    \label{total}
\end{figure}

\begin{remark}
	It is worth mentioning that the filtering performance of the algorithm in \cite{kamal2013information} and \cite{talebi2019distributed} also showed exponential convergence {to the centralized setting with the increasing of $L$}, in their numerical experiments but without proof. In this paper, a rigorous proof of the exponential convergence of the CMDF algorithm {is provided}.
\end{remark}

\section{Conclusion}\label{sec5}
In this paper, the performance of the CMDF algorithm is detailedly investigated. With the derived results on the properties of solutions to DARE, the trade-off between the fusion step and the steady-state performance of the filtering algorithm {is revealed}, including the performances of steady-state matrix iterations and steady-state error covariance matrix. It is shown that with the number of fusion step tending to infinity, both the matrix iteration performance and error covariance performance converge to the centralized optimal ones exponentially.
Compared with the literature, the results established in this paper {theoretically supplement to} the problem of consensus-based distributed filtering. In addition to this, {the results on the} DARE can be utilized in other fields of state estimation. Future work will include a preciser analysis on the exponential bound of $\big\|\tilde{P}_i^{(L)}-P\big\|_2$, {including parameter selection for controlling the steady-state performance of the algorithm, and parameter tuning technique for a better convergence rate. Since the inconsistency analysis of the CM-based distributed filering algorithm is important for performance evaluation, effective methods will be developed for dealing with inconsistency in future work.}

\appendix

\section{Proof of Lemma \ref{lm3}}\label{Pflm3}
	Consider the Schur Decomposition of matrix $A$ as
	$
	A=U^H TU
	$, where $U$ is a unitary matrix and $T$ is an upper-triangular matrix with the diagonal elements being the eigenvalues of $A$. As the singular value is unitary invariant, it is {easy} to obtain  
	$$
	\lambda_i \left(T\right)=\lambda_i \left(A\right),\qquad \big\|T\big\|_2=\big\|A\big\|_2.
	$$ 
	By the definition of 2-norm, one has
	$$
	\big\|A\big\|_2^2=\max\left(\lambda\left(AA^T\right)\right)=\max\left(\lambda\left(TT^T\right)\right).
	$$
	As the maximum eigenvalue of the positive definite matrix $AA^T$ is larger than all the diagonal elements of $AA^T$, one has
	$$
	\big\|A\big\|_2^2\ge \max_{i}\sum_{j=1}^n t_{ij}^2,\qquad\max_{i,j}\left|t_{ij}\right|\leq \big\|A\big\|_2,
	$$
	where $t_{ij}$ is the $\left(i,j\right)$-th element of $T$. Denote $M_T=\max_{i,j}\left|t_{ij}\right|$. {By Corollary 3.15 in \cite{dowler2013bounding}, one has}
	$$
	\begin{aligned}
		\left\|A^k\right\|_2&\leq\sqrt{n}\sum_{j=0}^{n-1}\binom{n-1}{j}\binom{k}{j}M_T^j\rho\left(A\right)^{k-j}\\
		&\leq\sqrt{n}\sum_{j=0}^{n-1}\binom{n-1}{j}\binom{k}{j}\big\|A\big\|_2^j\rho\left(A\right)^{k-j}.
	\end{aligned}
	$$ $\hfill \square$

\section{Proof of Theorem \ref{thmconverge}}\label{Pfconverge}
With Algorithm \ref{alg1}, one obtains
	$$
    \begin{aligned}
	 \tilde{x}_{i,k|k}=&P_{i,k|k}P_{i,k|k}^{-1}x_k-P_{i,k|k}\\
	 \times&\Big(P_{i,k|k-1}^{-1}x_{i,k|k-1}+\sum_{j=1}^{N}l_{ij}^{(L)}C_{j}^{T}R_{j}^{-1}y_j\Big)\\
	 =&\;P_{i,k|k}\Big(P_{i,k|k-1}^{-1}\tilde{x}_{i,k|k-1}-\big(\tilde{C}_i^{(L)}\big)^T\big(\tilde{R}_i^{(L)}\big)^{-1}\tilde{v}_{i,k}\Big),\\
	 \tilde{P}_{i,k|k}=&P_{i,k|k}P_{i,k|k-1}^{-1}\tilde{P}_{i,k|k-1}P_{i,k|k-1}^{-1}P_{i,k|k}\\
	 +&P_{i,k|k}\big(\tilde{C}_i^{(L)}\big)^T\big(\tilde{R}_i^{(L)}\big)^{-1}\bar{R}_i^{(L)}\big(\tilde{R}_i^{(L)}\big)^{-1}\tilde{C}_i^{(L)}P_{i,k|k}
    \end{aligned}
    $$
    and
    $$
     \begin{aligned}
	 \tilde{x}_{i,k+1|k}=&A\tilde{x}_{i,k|k}+\omega_{k},\\
	 \tilde{P}_{i,k+1|k}=&A\tilde{P}_{i,k|k}A^T+Q,
      \end{aligned}
    $$
    where $\tilde{v}_{i,k}=\Big[sign\big(l_{i1}^{(L)}\big) v_{1,k}^T,\cdots,sign\big(l_{iN}^{(L)}\big)v_{N,k}^T\Big]^T$.
    
    {It follows that}
    $$
    \begin{aligned}
	    AP_{i,k|k}P_{i,k|k-1}^{-1}&=A-AP_{i,k|k-1}\big(\tilde{C}_i^{(L)}\big)^{T}\\
	    \times&\Big(\tilde{C}_i^{(L)}P_{i,k|k-1}\big(\tilde{C}_i^{(L)}\big)^T+\tilde{R}_i^{(L)}\Big)^{-1}\tilde{C}_i^{(L)},
    \end{aligned}
    $$
    and
    $$
    \begin{aligned}
    	P_{i,k|k}\big(\tilde{C}_i^{(L)}\big)^{T}&\big(\tilde{R}_{i}^{(L)}\big)^{-1}=P_{i,k|k-1}\big(\tilde{C}_i^{(L)}\big)^{T}\\
    	&\quad\;\;\times\Big(\tilde{C}_i^{(L)}P_{i,k|k-1}\big(\tilde{C}_i^{(L)}\big)^{T}+\tilde{R}_{i}^{(L)}\Big)^{-1}.\qquad\quad
    \end{aligned}
    $$
	As $k$ tends to infinity, these two matrices will converge to the steady-state forms, i.e.,
    $$
    \begin{aligned}
	    \lim_{k\to\infty}&AP_{i,k|k}P_{i,k|k-1}^{-1}=\tilde{A}_{P_i^{(L)}},\\
	    \lim_{k\to\infty}&AP_{i,k|k}\tilde{C}_{i}^T\big(\tilde{R}_{i}^{(L)}\big)^{-1}=K_{P_i^{(L)}}.
    \end{aligned}
    $$  
  {With $P_i^{(L)}$ being the solution to DARE, the above matrix $\tilde{A}_{P_i^{(L)}}$ is Schur stable}, i.e., the spectral radius of $\tilde{A}_{P_i^{(L)}}$ is less than $1$ \cite{kailath2000linear}.
    Thus, with Theorem 1 in \cite{cattivelli2010diffusion}, the steady-state performance of $\tilde{P}_{i,k|k-1}$ will also converge, i.e., $\lim_{k\to\infty}\tilde{P}_{i,k|k-1}=\tilde{P}_i^{(L)}$ and the convergent solution  satisfies the DLE \eqref{lyaeq}.$\hfill \square$

\section{Proof of Lemma \ref{lmstable}}\label{Pflmstable}
Consider the DARE
	$$
     P=APA^T+Q-APC^T\left(CPC^T+R\right)^{-1}CPA.
    $$
	Following the similar idea in \cite{kailath2000linear}, the above equation is rewritten as
	\begin{equation}\label{eqstable}
		P = \tilde{A}_P P\tilde{A}_P^T+Q+K_PRK_P^T,
	\end{equation}
	where $K_P = APC^T\left(CPC^T+R\right)^{-1}$. Consider the left eigenvector $x$ of $\tilde{A}_P$ with corresponding eigenvalue $\lambda$, i.e., $x\tilde{A}_P=\lambda x$. Pre- and post-multiplying the DARE (\ref{eqstable}) with $x$ and $x^H$, respectively, one has
	$$
	\begin{aligned}
		xPx^H&=\left|\lambda\right|^2xPx^H+xQx^H+xK_PRK_P^Tx^H\\
		&\ge\left|\lambda\right|^2xPx^H+xQx^H.
	\end{aligned}
	$$
	The inequality holds due to the fact that the matrix $K_PRK_P^T$ is positive semi-definite. {Dividing} both side of the inequality with $xPx^H$, one has
	$$
	\left|\lambda\right|^2\leq 1-\frac{xQx^H}{xPx^H}\leq 1-\min_{x}\frac{xQx^H}{xPx^H}\leq 1-\frac{\lambda_{min}\left(Q\right)}{\lambda_{max}\left(P\right)},
	$$
	thus the inequality of the spectral radius is proved. 
	
	For the inequality of the 2-norm, consider the singular value decomposition $\tilde{A}_P=U\Sigma V^T$, where $\Sigma=diag\left\lbrace \sigma_1,\dots,\sigma_n\right\rbrace$, $\sigma_1\ge\dots\ge\sigma_n\ge0$. Let $x=\left(1,0,\dots,0\right)U^T$. Pre- and post-multiplying the DARE (\ref{eqstable}) with $x$ and $x^T$, respectively, one has
	$$
	\begin{aligned}
		xPx^T=\sigma_1^2\left(V^TPV\right)_{1,1}+xQx^T+xKRK^Tx^T,
	\end{aligned}
	$$
	where $\big(V^TPV\big)_{1,1}$ is the $(1,1)$-th element of matrix $V^TPV$. Note that $\left(V^TPV\right)_{1,1}\ge \lambda_{min}\left(V^TPV\right)=\lambda_{min}\left(P\right)$. Thus, one has
	$$
	\sigma_1^2\leq \frac{\lambda_{max}\left(P\right)}{\lambda_{min}\left(P\right)}\leq \frac{\lambda_{max}\left(P\right)}{\lambda_{min}\left(Q\right)},
	$$
	where the last inequality {holds because} $P\ge Q$. $\hfill \square$
\section{Proof of Lemma \ref{lmdifference}}\label{Pflmdifference}
    As $P_1$ and $P_2$ satisfy the DARE \eqref{twodare}, one has
	\begin{equation}\label{minus}
		\begin{aligned}
			P_1-P_2=&A\left(P_1-P_2\right)A^T\\
			&-K_{P_1}C\left(P_1-P_2\right)A^T-A\left(P_1-P_2\right)C^TK_{P_2}^T\\
			&-K_{P_1}CP_2A^T+AP_1C^TK_{P_2}^T.\\
		\end{aligned}
	\end{equation}
	Note that 
	$$
	\begin{aligned}
		&AP_1C^TK_{P_2}^T-K_{P_1}CP_2A^T\\
		=&K_{P_1}\left(CP_1C^T+R_1\right)K_{P_2}-K_{P_1}\left(CP_2C^T+R_2\right)K_{P_2}\\
		=&K_{P_1}C\left(P_1-P_2\right)C^TK_{P_2}^T+K_{P_1}\left(R_1-R_2\right)K_{P_2}^T.
	\end{aligned}
	$$
	Thus, the original equation \eqref{minus} can be rewritten as
	$$
	\begin{aligned}
		P_1-P_2=&\left(A-K_{P_1}C\right)\left(P_1-P_2\right)\big(A-K_{P_2}C\big)^T\\
		&+K_{P_1}\left(R_1-R_2\right)K_{P_2}^T\\
		=&\tilde{A}_{P_1}\left(P_1-P_2\right)\tilde{A}_{P_2}^T+K_{P_1}\left(R_1-R_2\right)K_{P_2}^T.
	\end{aligned}
	$$
{Performing} the above iteration for $k$ times, one has
	$$
	\begin{aligned}
		P_1-P_2=&\tilde{A}_{P_1}^k\left(P_1-P_2\right)\big(\tilde{A}_{P_2}^T\big)^k\\
		&+\sum_{i=0}^{k-1}\tilde{A}_{P_1}^iK_{P_1}\left(R_1-R_2\right)K_{P_2}^T\big(\tilde{A}_{P_2}^T\big)^i.
	\end{aligned}
	$$
	Due to the fact that $\tilde{A}_{P_1}=A-K_{P_1}C$ and $\tilde{A}_{P_2}=A-K_{P_2}C$ are Schur stable, {by performing the iteration for infinitely many times,} one has
	$$
	\lim_{k\to\infty}\tilde{A}_{P_1}^k\left(P_1-P_2\right)\big(\tilde{A}_{P_2}^T\big)^k={\bf O},
	$$
	and
	\begin{equation}\label{infisum}	
			P_1-P_2=\sum_{i=0}^{\infty}\tilde{A}_{P_1}^iK_{P_1}\left(R_1-R_2\right)K_{P_2}^T\big(\tilde{A}_{P_2}^T\big)^i.
	\end{equation}
	With the Kalman identical equation, one has
	$$K_{P_1}=A\bar{P}_1C^TR_1^{-1}\qquad K_{P_2}=A\bar{P}_2C^TR_2^{-1}.$$ 
	Therefore, the equation (\ref{infisum}) can be rewritten as
	$$
	P_1-P_2=\sum_{i=0}^{\infty}\tilde{A}_{P_1}^iA\bar{P}_1C^T\left(R_2^{-1}-R_1^{-1}\right)C\bar{P}_2A^T\big(\tilde{A}_{P_2}^T\big)^i.
	$$ $\hfill \square$

\section{Proof of Theorem \ref{thmric}}\label{Pfthmric}
	{First,} analyze the property of $\big\|\big(\tilde{R}_i^{(L)}\big)^{-1}-R^{-1}\big\|_2$.
	
	{Using} the designed algorithm, one can obtain the following relationship:
	$$
	\big(\tilde{R}_{i}^{(L)}\big)^{-1}=Ndiag\big(l_{i1}^{(L)}I_{n_1},\cdots,l_{iN}^{(L)}I_{n_N}\big)R^{-1}.
	$$
 {By} Lemma \ref{lm1}, the matrix $\mathcal{L}^{L}$ converges to $\frac{1}{N}\textbf{1}_N\textbf{1}_N^T$ exponentially {with increasing $L$}, which indicates that each term $Nl_{ij}^{(L)}$ converges to $1$ exponentially. Thus, there exist two constants $M_0>0$ and $0<q<1$ such that, for any $L\ge d$ and $i\in\mathcal{V}$, 
	$$
	\big\|\big(\tilde{R}_{i}^{(L)}\big)^{-1}-R^{-1}\big\|_2<M_0q^L.
	$$
	The main idea of the following proof is to find {a uniform bound on} $\sum_{k=0}^{\infty}\big\|\tilde{A}_{P}^k\big\|_2\big\|\tilde{A}_{P_i^{(L)}}^k\big\|_2$ for $\tilde{R}_i^{(L)}$, $\forall i\in\mathcal{V}, L\ge d$. 
	With Lemma \ref{lm3}, one can obtain
	\begin{small}
	$$
	\Big\|\tilde{A}_{P_i^{(L)}}^k\Big\|_2\leq\sqrt{n}\sum_{j=0}^{n-1}\binom{n-1}{j}\binom{k}{j}\Big\|\tilde{A}_{P_{i}^{(L)}}\Big\|_2^j\rho\Big(\tilde{A}_{P_{i}^{(L)}}\Big)^{k-j},
	$$
	\end{small}

	{where $n$ is the dimension of the system. Choose a sufficiently large $\mathcal{R}$}, such that $\tilde{R}_{i}^{(L)}\leq\mathcal{R}$ for all $i\in\mathcal{V}$ and $L\ge d$. Denote $\mathcal{P}=dare\left(A,C,Q,\mathcal{R}\right)$, 
	With Lemma \ref{lmmono} and Lemma \ref{lmstable}, one can obtain 
	$$
	 \begin{aligned}
		&\big\|\tilde{A}_{P_{i}^{(L)}}\big\|_2\leq \sqrt{\frac{\lambda_{max}\big(P_{i}^{(L)}\big)}{\lambda_{min}\left(Q\right)}}\leq\sqrt{\frac{\lambda_{max}\left(\mathcal{P}\right)}{\lambda_{min}\left(Q\right)}}\triangleq\sigma_{1\mathcal{P}},\\
		&\rho\big(\tilde{A}_{P_{i}^{(L)}}\big)\leq \sqrt{1-\frac{\lambda_{min}\left(Q\right)}{\lambda_{max}\big(P_{i}^{(L)}\big)}}\leq\sqrt{1-\frac{\lambda_{min}\left(Q\right)}{\lambda_{max}\left(\mathcal{P}\right)}}\triangleq \rho_{\mathcal{P}}.
	 \end{aligned}
	$$
	Thus, one has
	$$
	\begin{aligned}
		\sum_{k=0}^{\infty}\Big\|\tilde{A}_{P_i^{(L)}}^k\Big\|_2\leq&\sum_{k=0}^{\infty}\sqrt{n}\sum_{j=0}^{n-1}\binom{n-1}{j}\binom{k}{j}\sigma_{1\mathcal{P}}^j\rho_{\mathcal{P}}^{k-j}\\
		\leq&\sum_{k=0}^{\infty}\sqrt{n}\left(n\max_{j}\binom{n-1}{j}\sigma_{1\mathcal{P}}^j\right)k^n\rho_{\mathcal{P}}^{k-n}
	\end{aligned}
	$$
	for all $i\in\mathcal{V}$ and $L\ge d$. It is {easy} to verify the convergence of the infinite sum $\sum_{k=0}^{\infty}\big\|\tilde{A}_{P_i^{(L)}}^k\big\|_2$ with $\rho_{\mathcal{P}}<1$. {Based on} the fact that the 2-norm $\big\|\tilde{A}_{P}^k\big\|_2$ is uniformly bounded for all $k$, one can find a real number $M_0^{'}$ such that
	$$
	\sum_{k=0}^{\infty}\Big\|\tilde{A}_{P}^k\Big\|_2\Big\|\tilde{A}_{P_i^{(L)}}^k\Big\|_2\leq M_0^{'},\;\;\forall i\in\mathcal{V},\forall L\ge d.
	$$
	With the above analysis, the theorem is proved.$\hfill \square$

\section{Proof of Theorem \ref{thmcov}}\label{Pfthmcov}
The matrices $P_{i}^{(L)}$ and $\tilde{P}_i^{(L)}$ satisfy the following equations:
$$
\begin{aligned}
	&P_{i}^{(L)}=\tilde{A}_{P_i^{(L)}}P_{i}^{(L)}\tilde{A}_{P_i^{(L)}}^T+Q+K_{P_{i}^{(L)}}\tilde{R}_i^{(L)}K_{P_{i}^{{(L)}}}^T,\\
	&\tilde{P}_i^{(L)}=\tilde{A}_{P_i^{(L)}}\tilde{P}_i^{(L)}\tilde{A}_{P_i^{(L)}}^T+Q+K_{P_{i}^{(L)}}RK_{P_{i}^{{(L)}}}^T.
\end{aligned}
$$
Then, one can obtain

$$
\begin{aligned}
	\tilde{P}_{i}^{(L)}-P_{i}^{(L)}=&\tilde{A}_{P_i^{(L)}}\big(\tilde{P}_{i}^{(L)}-P_{i}^{(L)}\big)\tilde{A}_{P_i^{(L)}}^T\\
	&\qquad\qquad+K_{P_{i}^{(L)}}\big(R-\tilde{R}_{i}^{(L)}\big)K_{P_{i}^{{(L)}}}^T,
\end{aligned}
$$
and
\begin{equation}\label{sumCov}
	\tilde{P}_{i}^{(L)}-P_{i}^{(L)}=\sum_{k=0}^{\infty}\tilde{A}_{P_i^{(L)}}^k K_{P_{i}^{(L)}}\big(R-\tilde{R}_{i}^{(L)}\big)K_{P_{i}^{{(L)}}}^T\big(\tilde{A}_{P_i^{(L)}}^T\big)^k.
\end{equation}

{Using} $K_{P_{i}^{(L)}}=A\bar{P}_i^{(L)}C^T\big(\tilde{R}_{i}^{(L)}\big)^{-1}$, one can rewrite \eqref{sumCov} as
$$
\begin{aligned}
	\tilde{P}_{i}^{(L)}-P_{i}^{(L)}=&\sum_{k=0}^{\infty}\tilde{A}_{P_i^{(L)}}^k A\bar{P}_i^{(L)}C^T\big(\tilde{R}_{i}^{(L)}\big)^{-1}\\
	&\times\big(R-\tilde{R}_{i}^{(L)}\big)\big(\tilde{R}_{i}^{(L)}\big)^{-1}C\bar{P}_i^{(L)}A^T\big(\tilde{A}_{P_i^{(L)}}^T\big)^k.
\end{aligned}
$$
{Thus, it follows that}
	$$
	\begin{aligned}
		\Big\|\tilde{P}_i^{(L)}-P_i^{\left(L\right)}\Big\|_2\leq& \Big\|\big(\tilde{R}_{i}^{(L)}\big)^{-1}\big(R-\tilde{R}_{i}^{(L)}\big)\big(\tilde{R}_{i}^{(L)}\big)^{-1}\Big\|_2\\
		&\times\big\|\bar{P}_i^{(L)}\big\|_2^2\big\|C\big\|_2^2\big\|A\big\|_2^2\sum_{k=0}^{\infty}\big\|\tilde{A}_{P_i^{(L)}}^k\big\|_2^2.
	\end{aligned}
	$$
	Similarly to the proof in Theorem \ref{thmric}, one can find a uniform bound as $\big\|\bar{P}_i^{(L)}\big\|_2^2\sum_{k=0}^{\infty}\big\|\tilde{A}_{P_i^{(L)}}^k\big\|_2^2\leq M_2^{'},\forall i\in\mathcal{V},\forall L\ge d$ with the $\mathcal{R},\mathcal{P}$ determined in the proof of Theorem 2. {Thus, it suffices} to discuss the term $\big(\tilde{R}_{i}^{(L)}\big)^{-1}\big(R-\tilde{R}_{i}^{(L)}\big)\big(\tilde{R}_{i}^{(L)}\big)^{-1}$ to {show} the convergence with increasing $L$.

	Both matrices $R$ and $\tilde{R}_{i}^{(L)}$ are block diagonal. Thus, one can obtain 
	\begin{small}
	$$
	\begin{aligned}
		\big(\tilde{R}_{i}^{(L)}\big)^{-1}\big(R-\tilde{R}_{i}^{(L)}\big)\big(\tilde{R}_{i}^{(L)}\big)^{-1}
		=diag\big(\tilde{l}_{i1}^{(L)}I_{n_1},\cdots,\tilde{l}_{i1}^{(L)}I_{n_N}\big)R^{-1},
	\end{aligned}
	$$
	\end{small}

	where $\tilde{l}_{ij}^{(L)}=N^2\big(l_{ij}^{(L)}\big)^2-Nl_{ij}^{(L)}$. One can rewrite the expression of $\tilde{l}_{ij}^{(L)}$ as 
	$$
	\tilde{l}_{ij}^{(L)}=\big(Nl_{ij}^{(L)}-1\big)^2+\big(Nl_{ij}^{(L)}-1\big).
	$$
	With Lemma \ref{lm1}, the term $Nl_{ij}^{(L)}-1$ converges to $0$ exponentially with increasing $L$, i.e., there exists $M_3>0$, $0<q_3<1$, such that $\forall i,j$, one has $\big|Nl_{ij}^{(L)}-1\big|\leq M_3q_3^L$. {For $L\ge log_{\frac{1}{q_3}}M_3$, one has} $\big(Nl_{ij}^{(L)}-1\big)^2<\big|Nl_{ij}^{(L)}-1\big|$, {therefore there exist two parameters} $M_2^{''}\ge 0$ and $0<q_2<1$, such that
	$$
	\begin{aligned}
		\Big\|\big(\tilde{R}_{i}^{(L)}\big)^{-1}\big(R-\tilde{R}_{i}^{(L)}\big)\big(\tilde{R}_{i}^{(L)}\big)^{-1}\big\|_2\leq M_2^{''}q_2^{L}
	\end{aligned}
	$$ 
	for all $i\in\mathcal{V}$ and $L\ge d$. 
	The theorem is proved.$\hfill \square$

\bibliographystyle{IEEEtran}
\bibliography{ref}

\begin{thebibliography}{10}
\providecommand{\url}[1]{#1}
\csname url@samestyle\endcsname
\providecommand{\newblock}{\relax}
\providecommand{\bibinfo}[2]{#2}
\providecommand{\BIBentrySTDinterwordspacing}{\spaceskip=0pt\relax}
\providecommand{\BIBentryALTinterwordstretchfactor}{4}
\providecommand{\BIBentryALTinterwordspacing}{\spaceskip=\fontdimen2\font plus
\BIBentryALTinterwordstretchfactor\fontdimen3\font minus
  \fontdimen4\font\relax}
\providecommand{\BIBforeignlanguage}[2]{{%
\expandafter\ifx\csname l@#1\endcsname\relax
\typeout{** WARNING: IEEEtran.bst: No hyphenation pattern has been}%
\typeout{** loaded for the language `#1'. Using the pattern for}%
\typeout{** the default language instead.}%
\else
\language=\csname l@#1\endcsname
\fi
#2}}
\providecommand{\BIBdecl}{\relax}
\BIBdecl

\bibitem{Silva20151099}
B.~J. Silva, R.~M. Fisher, A.~Kumar, and G.~P. Hancke, ``Experimental link
  quality characterization of wireless sensor networks for underground
  monitoring,'' \emph{IEEE Transactions on Industrial Informatics}, vol.~11,
  no.~5, pp. 1099--1110, 2015.

\bibitem{Vu2015}
T.~T. Vu and A.~R. Rahmani, ``Distributed consensus-based {Kalman} filter
  estimation and control of formation flying spacecraft: Simulation and
  validation,'' \emph{AIAA Guidance, Navigation, and Control Conference 2015,
  MGNC 2015 - Held at the AIAA SciTech Forum 2015}, 2015.

\bibitem{olfati2007distributed}
R.~Olfati-Saber, ``Distributed {Kalman} filtering for sensor networks,'' in
  \emph{Proceedings of the IEEE Conference on Decision and Control}, 2007, pp.
  5492--5498.

\bibitem{olfati2009kalman}
------, ``Kalman-consensus filter: Optimality, stability, and performance,'' in
  \emph{Proceedings of the IEEE Conference on Decision and Control held jointly
  with 2009 28th Chinese Control Conference}, 2009, pp. 7036--7042.

\bibitem{cattivelli2010diffusion}
F.~S. Cattivelli and A.~H. Sayed, ``Diffusion strategies for distributed
  {Kalman} filtering and smoothing,'' \emph{IEEE Transactions on Automatic
  Control}, vol.~55, no.~9, pp. 2069--2084, 2010.

\bibitem{kamal2013information}
A.~T. Kamal, J.~A. Farrell, and A.~K. Roy-Chowdhury, ``Information weighted
  consensus filters and their application in distributed camera networks,''
  \emph{IEEE Transactions on Automatic Control}, vol.~58, no.~12, pp.
  3112--3125, 2013.

\bibitem{battistelli2014kullback}
G.~Battistelli and L.~Chisci, ``Kullback--leibler average, consensus on
  probability densities, and distributed state estimation with guaranteed
  stability,'' \emph{Automatica}, vol.~50, no.~3, pp. 707--718, 2014.

\bibitem{Battistelli2015Linear}
G.~{Battistelli}, L.~{Chisci}, G.~{Mugnai}, A.~{Farina}, and A.~{Graziano},
  ``Consensus-based linear and nonlinear filtering,'' \emph{IEEE Transactions
  on Automatic Control}, vol.~60, no.~5, pp. 1410--1415, 2015.

\bibitem{Wang20181300}
S.~Wang and W.~Ren, ``On the convergence conditions of distributed dynamic
  state estimation using sensor networks: A unified framework,'' \emph{IEEE
  Transactions on Control Systems Technology}, vol.~26, no.~4, pp. 1300--1316,
  2018.

\bibitem{he2020distributed}
X.~He, W.~Xue, X.~Zhang, and H.~Fang, ``Distributed filtering for uncertain
  systems under switching sensor networks and quantized communications,''
  \emph{Automatica}, vol. 114, p. 108842, 2020.

\bibitem{duan2020distributed}
P.~Duan, Z.~Duan, G.~Chen, and L.~Shi, ``Distributed state estimation for
  uncertain linear systems: A regularized least-squares approach,''
  \emph{Automatica}, vol. 117, p. 109007, 2020.

\bibitem{li2020Bound}
W.~{Li}, Z.~{Wang}, D.~W.~C. {Ho}, and G.~{Wei}, ``On boundedness of error
  covariances for {Kalman} consensus filtering problems,'' \emph{IEEE
  Transactions on Automatic Control}, vol.~65, no.~6, pp. 2654--2661, 2020.

\bibitem{Sayed20219353995}
S.~P. Talebi, S.~Werner, V.~Gupta, and Y.-F. Huang, ``On stability and
  convergence of distributed filters,'' \emph{IEEE Signal Processing Letters},
  vol.~28, pp. 494--498, 2021.

\bibitem{Kamagarpour4738989}
M.~{Kamgarpour} and C.~J. {Tomlin}, ``Convergence properties of a decentralized
  {Kalman} filter,'' in \emph{2008 47th IEEE Conference on Decision and
  Control}, 2008, pp. 3205--3210.

\bibitem{BATTILOTTI2021109589}
S.~Battilotti, F.~Cacace, and M.~d’Angelo, ``A stability with optimality
  analysis of consensus-based distributed filters for discrete-time linear
  systems,'' \emph{Automatica}, vol. 129, p. 109589, 2021.

\bibitem{Andersonoptimal}
B.~D.~O. Anderson and J.~B. Moore, \emph{Optimal filtering}.\hskip 1em plus
  0.5em minus 0.4em\relax Englewood Cliffs, N.J: Prentice-Hall, 1979.

\bibitem{liCMweight}
W.~{Li}, G.~{Wei}, D.~W.~C. {Ho}, and D.~{Ding}, ``A weightedly uniform
  detectability for sensor networks,'' \emph{IEEE Transactions on Neural
  Networks and Learning Systems}, vol.~29, no.~11, pp. 5790--5796, 2018.

\bibitem{Horn1985}
R.~A. Horn and C.~R. Johnson, \emph{\BIBforeignlanguage{English}{Matrix
  Analysis}}.\hskip 1em plus 0.5em minus 0.4em\relax New York; Cambridge:
  Cambridge University Press, 1985.

\bibitem{Xiao200465}
L.~Xiao and S.~P. Boyd, ``Fast linear iterations for distributed averaging,''
  \emph{Systems and Control Letters}, vol.~53, no.~1, pp. 65--78, 2004.

\bibitem{battistelli2018distributed}
G.~Battistelli, L.~Chisci, and D.~Selvi, ``A distributed {Kalman} filter with
  event-triggered communication and guaranteed stability,'' \emph{Automatica},
  vol.~93, pp. 75--82, 2018.

\bibitem{GHARESIFARD2012539}
B.~Gharesifard and J.~Cortes, ``Distributed strategies for generating
  weight-balanced and doubly stochastic digraphs,'' \emph{European Journal of
  Control}, vol.~18, no.~6, pp. 539--557, 2012.

\bibitem{zhisheng2012effects}
Z.~Duan, L.~Huang, and Z.-P. Jiang, ``On the effects of redundant control
  inputs in discrete-time systems,'' in \emph{Proceedings of the 31st Chinese
  Control Conference}.\hskip 1em plus 0.5em minus 0.4em\relax IEEE, 2012, pp.
  165--170.

\bibitem{he2018consistent}
X.~He, W.~Xue, and H.~Fang, ``Consistent distributed state estimation with
  global observability over sensor network,'' \emph{Automatica}, vol.~92, pp.
  162--172, 2018.

\bibitem{talebi2019distributed}
S.~P. Talebi and S.~Werner, ``Distributed {Kalman} filtering and control
  through embedded average consensus information fusion,'' \emph{IEEE
  Transactions on Automatic Control}, vol.~64, no.~10, pp. 4396--4403, 2019.

\bibitem{talebi2020quaternion}
S.~P. Talebi, S.~Werner, and D.~P. Mandic, ``Quaternion-valued distributed
  filtering and control,'' \emph{IEEE Transactions on Automatic Control},
  vol.~65, no.~10, pp. 4246--4257, 2020.

\bibitem{yi8678811}
J.~{Yi}, L.~{Chai}, and J.~{Zhang}, ``Average consensus by graph filtering: New
  approach, explicit convergence rate, and optimal design,'' \emph{IEEE
  Transactions on Automatic Control}, vol.~65, no.~1, pp. 191--206, 2020.

\bibitem{dowler2013bounding}
D.~A. Dowler, ``Bounding the norm of matrix powers,'' 2013.

\bibitem{kailath2000linear}
T.~Kailath, A.~H. Sayed, and B.~Hassibi, \emph{Linear estimation}.\hskip 1em
  plus 0.5em minus 0.4em\relax Prentice Hall, 2000.

\end{thebibliography}

\end{document}